 \def\map#1{\mathcal #1}
\def\<{\langle}\def\>{\rangle}
\def\Tr{\operatorname{Tr}}\def\:{\hbox{\bf :}}
\def\N{\mathbb N}
\def\grp#1{\mathsf{#1}}
\def\Span{\mathsf{Span}}
\def\spc#1{\mathcal{#1}}
\def\set#1{\mathcal{#1}}
\newtheorem{theo}{{Theorem}}
\newtheorem{remark}{{Remark}}
\newtheorem{lem}{{Lemma}}
\begin{document}
\title{
Representation matching for remote quantum computing}

\author{Yuxiang Yang}
\email{yangyu@ethz.ch}
\affiliation{Institute for Theoretical Physics, ETH Z\"urich}
\affiliation{QICI Quantum Information and Computation Initiative, Department of Computer Science, The University of Hong Kong, Pokfulam Road, Hong Kong}
\author{Masahito Hayashi}
\email{hayashi@sustech.edu.cn}
\affiliation{Shenzhen Institute for Quantum Science and Engineering, Southern University of Science and Technology, Shenzhen, 518055, China}
\affiliation{Guangdong Provincial Key Laboratory of Quantum Science and Engineering,
Southern University of Science and Technology, Shenzhen 518055, China}
\affiliation{Shenzhen Key Laboratory of Quantum Science and Engineering, Southern
University of Science and Technology, Shenzhen 518055, China}
\affiliation{Graduate School of Mathematics, Nagoya University, Nagoya, 464-8602, Japan}

\begin{abstract}
Many quantum computational tasks have inherent symmetries, suggesting a path to enhancing their efficiency and performance.
Exploiting this observation, we propose representation matching, a generic probabilistic protocol for reducing the cost of quantum computation in a quantum network.
We show that the representation matching protocol is capable of reducing the communication or memory cost to almost minimum in various tasks, including remote execution of unitary gate arrays, permutation gates and unitary conjugation, as well as the storage and retrieval of unitary gates.
\end{abstract}

\maketitle

\medskip

\section{Introduction}
The past few years have witnessed tremendous progress in quantum computing and quantum communication. 
The union of technologies from these two directions will lead to a quantum internet \cite{kimble2008quantum}, where remote nodes can exchange quantum data, execute protocols, and share computational power \cite{childs2005secure,broadbent2009universal,fitzsimons2017unconditionally,kashefi2017multiparty} via quantum communication channels (see Figure~\ref{fig-network}).

  \begin{figure}[b]
\subfigure[]{
\includegraphics[width=0.95\linewidth]{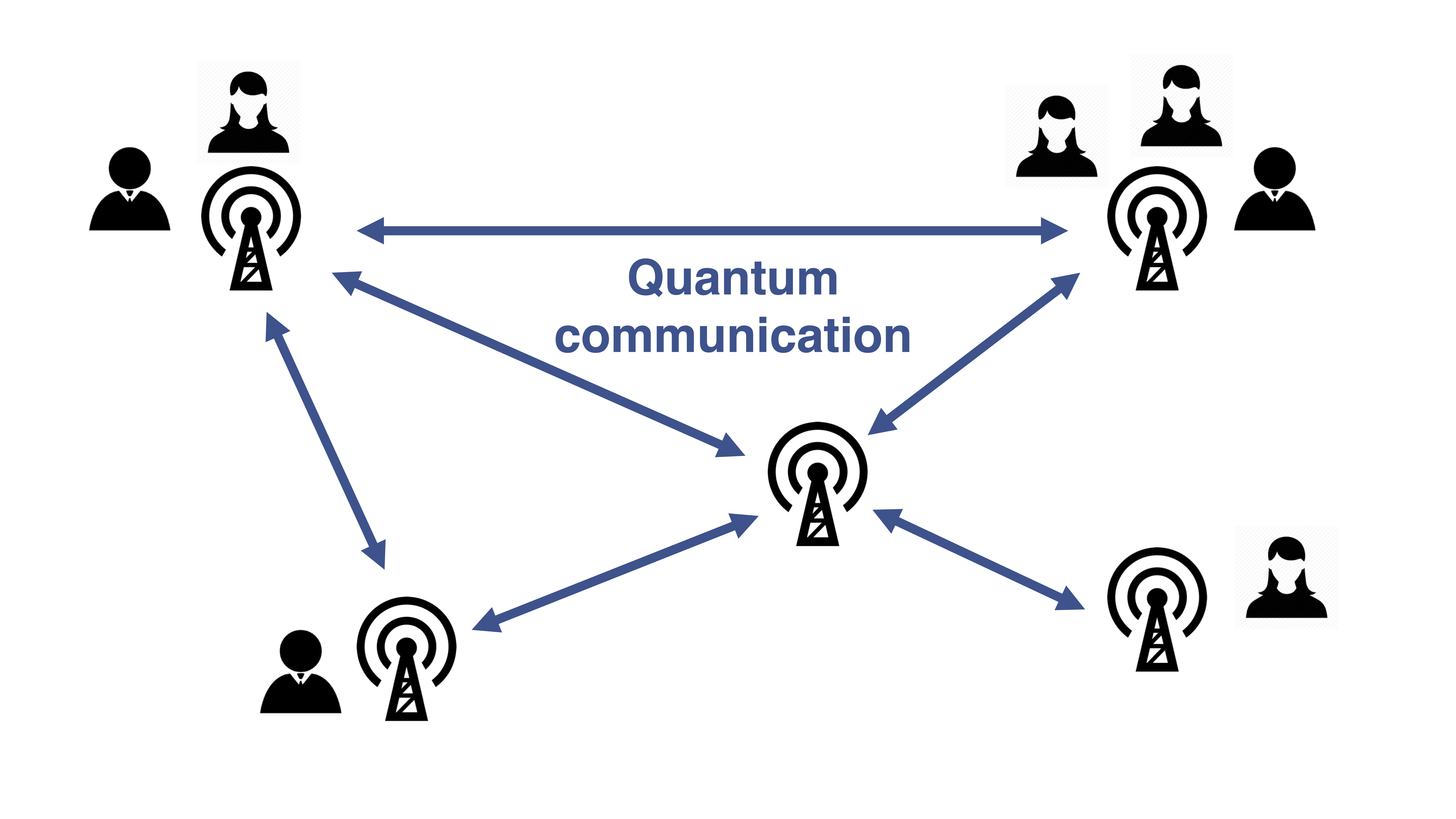}\label{fig-network}}\\
\subfigure[]{
\includegraphics[width=0.95\linewidth]{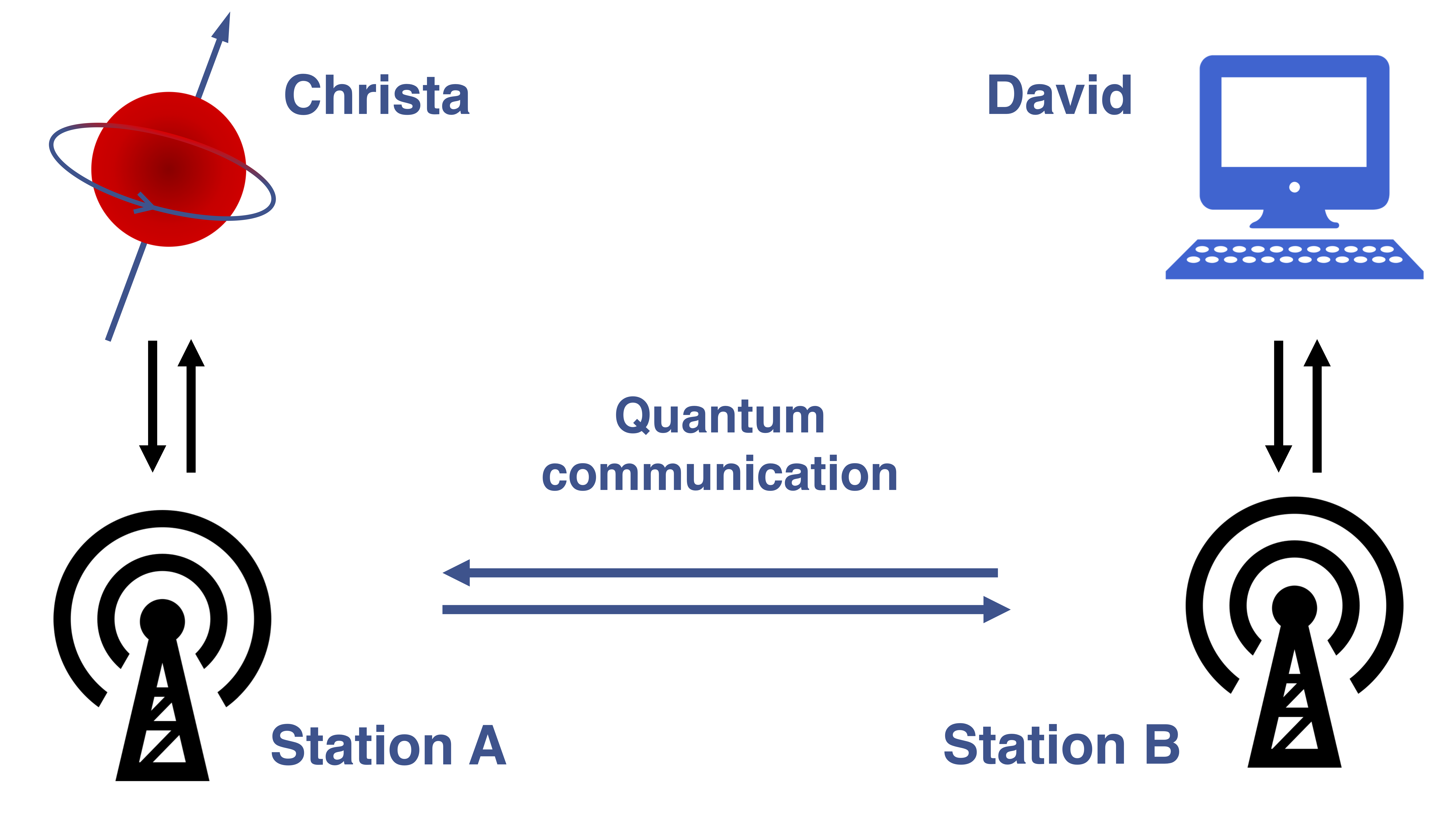}\label{fig-setting}}
\caption{{\bf a) Quantum Internet.} In a quantum network, multiple parties are linked via quantum communication channels. {\bf b) Remote quantum computation.} The task is for David to apply a quantum gate on a quantum state, held by a remote party Christa. The goal is to accomplish this objective while keeping the communication cost as low as possible.}
\end{figure} 

One of the key issues in such a quantum network is the communication cost, quantified by the number of qubits needed to be sent via the communication channels. 
Any proposal of reducing the communication cost will have increasing importance, as the scale of quantum computation and quantum network is expected to increase rapidly in the near future.

To reduce the communication cost, one possibility is to consider probabilistic protocols. 
Many quantum protocols or algorithms, e.g.\ quantum key distribution \cite{brassard1984quantum}, magic state distillation \cite{bravyi2005universal} and error mitigation \cite{temme2017error}, are probabilistic, where one repeats the protocol multiple times until succeed. 
Probabilistic protocols play an pivotal role in circumventing no-go theorems of quantum information \cite{wootters1982single,nielsen1997programmable,buvzek1999optimal}.
 Also, to discuss the classification of quantum complexity classes, e.g., the nondeterministic quantum polynomial-time complexity class  (NQP),
a protocol with a small successful probability plays an important role
\cite[Theorem 7]{QC1}\cite[Theorem 1]{QC2},\cite{QC3}.
Here in the network setting, the remote parties can communicate the desired computation using probabilistic protocols such as gate teleportation \cite{gottesman1999demonstrating,bartlett2003quantum}.
In communication scenarios, probabilistic protocols can overcome limitations including restricted bandwidth and short memory life and accomplish tasks that are impossible for deterministic ones.
Therefore, it may still be beneficial to employ a probabilistic protocol, even if its expected communication cost, i.e. the expected amount of required communication until the protocol succeeds, is higher than deterministic ones. 

Another less explored observation is that many quantum computational tasks are associated with inherent symmetry. In another word, the relevant quantum gates form a group. For example, regarding computations on $n$ qubits, there are the Pauli group $\grp{P}(n)$, the Clifford group $\grp{C}(n)$, the permutation group $\grp{S}(n)$, the braiding group $\grp{B}(n)$, and the special unitary group $\grp{SU}(2)$. 
Many fundamental tasks of quantum information processing adopt group-theoretic structures, including cloning of states \cite{gisin1997optimal,bruss1998optimal,bruss2000phase,chiribella2013quantum} and gates \cite{chiribella2008optimal,dur2015deterministic,chiribella2015universal}, universal gate programming \cite{kim2001storing,vidal2002storing,hillery2002probabilistic,brazier2005probabilistic,ishizaka2008asymptotic,kubicki2019resource,yang2020optimal}, the storage and retrieval of unitary gates \cite{bisio2010optimal,sedlak2019optimal},  and the invertion of general unitary operations \cite{quintino2019reversing}.
A natural question is: Can we utilize the inherent symmetry of a task to enhance the performance and, in particular, to reduce the cost of communication in the remote setting?

%Quantum information processing deals with manipulation of quantum resources, including quantum states and gates. Nevertheless, perfect protocols are rare in the quantum domain, as there are a couple of no-go theorems standing as pillars of the theory. Operations that clone quantum information \cite{wootters1982single}, program quantum gates \cite{nielsen1997programmable}, or invert the logical value of a quantum bit \cite{buvzek1999optimal} are all forbidden.

%There are two major ways of circumventing these no-go theorems. One is to consider approximate protocols that fulfil the desired tasks up to a small error, and the other is to consider probabilistic protocols.

%Besides circumventing no-go theorems, allowing the protocol to fail may also \emph{reduce the demand for other resources}.Imagine in a delegate computing setting, a client asks a remote server to execute an algorithm on some input data (located at the client's side). If the communication link has limited bandwidth, probabilistic protocols, which require a smaller bandwidth, may be able to accomplish tasks that are impossible for deterministic protocols. Therefore, it is still meaningful to consider a probabilistic protocol, even if its expected communication cost, i.e. the expected amount of required communication until the protocol succeeds, is higher than deterministic ones.

In this work, we consider remote quantum computation in a network, where one party, David, would like to run a unitary gate on a quantum state held by Christa, who is far away from him (see Figure \ref{fig-setting}).
We propose representation matching: a generic probabilistic protocol capable of reducing the communication cost of this task, by exploiting the inherent symmetry of the task. We then prove a general lower bound on the communication cost of such setting of remote quantum computing.
When the computation is an array of $n$ identical unitary gates of dimension $d$, our protocol attains the lower bound up to a small overhead that is independent of $n$.
Moreover, the success probability of our protocol is much higher than that of the protocol based on gate teleportation. In particular, the ratio between the two success probabilities grows as $n^{d^2-1}$. Besides, we also apply our protocol to various tasks such as permutational quantum computing  \cite{marzuoli2005computing,jordan2010permutational,planat2017magic,havlivcek2018quantum,ouyang2020faster}, unitary conjugation \cite{chiribella2016optimal,yang2017units,quintino2019reversing,quintino2019probabilistic}, and the storage and retrieval of gates \cite{bisio2010optimal,sedlak2019optimal,yang2020optimal}.

The remaining part of the paper is organised as follows: In Section \ref{sec:prelim} we introduce a few results and notations that are essential for our discussion. In Section \ref{sec-rm} we introduce the setting as well as the representation matching protocol, and in Section \ref{sec-lowerbound} we prove a general lower bound of communication cost under the same setting.  
In Sections \ref{sec-example-compression}, \ref{sec-perm}, \ref{sec-example-conj}, and \ref{sec-sr}, we apply the representation matching protocol to concrete quantum computational tasks. Finally, in Section \ref{sec-conclusion}, we conclude the paper with some discussions on future directions of research.

\section{Preliminary}\label{sec:prelim}

For a Hilbert space $\spc H$ and a vector $|\psi\>  \in  \spc H$, we will use the notation $\psi:  =  |\psi\>\<\psi|$ to denote the projector on the one-dimensional subspace spanned by $|\psi\>$.  
The space of linear operators from a Hilbert space $\spc H$ to another  Hilbert space $\spc K$  will be denoted by $L(\spc H,  \spc K)$.  When the two Hilbert spaces coincide, we will use the shorthand  $L(\spc H): =  L(\spc H,\spc H)$.  In this paper we will focus on finite-dimensional quantum systems,  with $\dim  (\spc H)  <\infty$.  
For a quantum system with Hilbert space $\spc H$, the set of quantum states   will be denoted by  ${\cal S} (\spc H) :  = \{  \rho  \in  L(\spc H) ~|~  \Tr [\rho]=1 \, , \,    \<\psi|  \rho  |\psi\>  \ge 0  \,  \forall |\psi\> \in \spc H \}$.   
%Since we mainly deal with probabilistic protocols, the concept of subnormalised quantum states will be useful, and we denote the set of subnormalised quantum states  by  $\tilde{\cal S} (\spc H) :  = \{  \rho  \in  L(\spc H) ~|~  \Tr [\rho]\le 1 \, , \,    \<\psi|  \rho  |\psi\>  \ge 0  \,  \forall |\psi\> \in \spc H \}$. 

 A quantum process that deterministically transforms an input system into a (possibly different) output system is called a {\em quantum channel}.   A quantum channel transforming an  input system with Hilbert space $\spc H^{\rm in}$ into an output system with (possibly different) Hilbert space $\spc H^{\rm out}$ is a completely positive trace-preserving map $\map C:  L( \spc H^{\rm in})  \to L  (\spc H^{\rm out})$.  A probabilistic quantum process transforming an  input system with Hilbert space $\spc H_{\rm in}$ into an output system with (possibly different) Hilbert space $\spc H^{\rm out}$ is called a \emph{quantum operation}, described by a completely positive trace-nonincreasing map $\map M:  L( \spc H^{\rm in})  \to L  (\spc H^{\rm out})$. A quantum operation  takes a quantum state in $\spc{H}^{\rm in}$ as input and produces a sub-normalised state in $\spc{H}^{\rm out}$ as output.

 %The set of all quantum channels with input space $\spc H_{\rm in}$ and output space $\spc H_{\rm out}$ will be denoted by  ${\sf Chan}  (\spc H_{\rm in},\spc H_{\rm out})$.   
%The space of linear maps from $L( \spc H_{\rm in})$ to $L  (\spc H_{\rm out})$  will be denoted as ${\sf Map} (\spc H_{\rm in}, \spc H_{\rm out})  :  =  L  \big( L( \spc H_{\rm in}) \, ,  L  (\spc H_{\rm out})  \big) $, again with the convention ${\sf Map}  (\spc H)  :  =  {\sf Map}  (\spc H, \spc H)$.  

We will frequently consider the Hilbert space $\spc{H}^{\otimes n}$, i.e., the Hilbert space of $n$ identical systems, each with a Hilbert space $\spc{H}$. We will treat it with basic knowledge of representation theory, and we refer interested readers to textbooks, e.g., \cite{fulton-harris} or Chapter 6 of \cite{book-hayashi}, for more information. Here we introduce a few useful results without further explanation.

The structure of $\spc{H}^{\otimes n}$ is characterised by the Schur-Weyl duality, which states that there exists an isometry transforming $\spc{H}^{\otimes n}$ into the block diagonal form:
\begin{align}\label{sw-duality}
\spc{H}^{\otimes n}\simeq\bigoplus_{\lambda\in\set{R}_n}\spc{H}^{\lambda}\otimes \spc{M}^{\lambda},
\end{align}
where $\set{R}_n$ is the collection of all Young diagrams of $n$ boxes, $\spc{H}^{\lambda}$ is the irreducible representation subspace of $\grp{SU(d)}$, the special unitary group of degree $d$, characterised by the Young diagram $\lambda$, and $\spc{M}^\lambda$ is the  irreducible representation subspace of $\grp{S}(n)$, the symmetric group of degree $n$. The isometry, named the Schur transform, can be implemented efficiently on a quantum computer \cite{harrow2005applications,bacon2006efficient,krovi2019efficient}. Since the irreducible representations are in one-to-one correspondence with the Young diagrams, the set $\set{R}_n$ is the collection of all Young diagrams with (at most) $d$ rows and $n$ boxes, defined as
\begin{align}
\set{R}_n:=\big\{\lambda=(\lambda_1,\dots,\lambda_d)~:~&\lambda_i\in\N, \lambda_i\ge\lambda_j,\forall i,j;\nonumber\\
&\qquad\sum_{i=1}^d\lambda_i=n\big\}.\label{def-Rn}
\end{align}

At last, we introduce a few dimensional factors that will be useful. The first one is the total number of irreducible representations in the decomposition (\ref{sw-duality}), i.e., the cardinality of the set $\set{R}_n$. By definition of $\set{R}_n$ [Eq.\ (\ref{def-Rn})], we have the following bound
\begin{align}\label{bound-Rn}
|\set{R}_n|\le (n+1)^{d-1}.
\end{align}
Next, the dimension of a $\grp{SU}(d)$ irreducible representation $\lambda$ can be obtained via the following formula
\begin{align}
d_\lambda=\frac{\prod_{1\le i<j\le d}(\lambda_i-\lambda_j-i+j)}{\prod_{k=1}^{d-1}k!}.
\end{align}
The dual of $d_\lambda$, the dimension $m_\lambda$ of a $\grp{S}(n)$ irreducible representation $\lambda$, has the following expression (see, e.g., \cite{fulton-harris}):
\begin{align}\label{dim-sym-irreps}
m_\lambda=\frac{n!\prod_{1\le j<k\le d}(\lambda_j-\lambda_k-j+k)}{\prod_{i=1}^{d}(\lambda_i+d-i)}.
\end{align}

Denoting by $d_{\rm R}$ the maximum of $d_\lambda$ over $\lambda\in\set{R}_n$, from the above formula we have \cite[Eq.\ (6.16)]{book-hayashi}
\begin{align}\label{dR-bound}
d_{\rm R}:=\max_{\lambda\in\set{R}_n}d_\lambda\le (n+1)^{\frac{d(d-1)}{2}}.
\end{align}
We denote by $d_{\rm tot}$ the sum of all $d_\lambda$ in $\set{R}_n$
\begin{align}\label{dtot-def}
d_{\rm tot}:=\sum_{\lambda\in\set{R}_n}d_\lambda,
\end{align}
 which can be bounded as
\begin{align}\label{dtot-bound}
d_{\rm tot}\le d_{\rm R}|\set{R}_n|\le(n+1)^{\frac{(d+2)(d-1)}{2}}.
\end{align}
Finally, we denote by $d_{\rm tot,sq}$ the sum of the squared dimension of all irreducible representations
\begin{align}\label{dast-def}
d_{\rm tot,sq}:=\sum_{\lambda\in\set{R}_n}d^2_\lambda={n+d^2-1\choose n},
\end{align}
having used \cite[Eq.~(57)]{schur1901klasse}. It follows that
\begin{align}\label{dast-asymp}
\log d_{\rm tot,sq} =(d^2-1)\log n+O(1),
\end{align}
where $O(1)$ denotes a term that does not depend on $n$.

\section{Representation matching protocol}\label{sec-rm}

Consider a common remote quantum computing scenario as in Figure \ref{fig-setting}.
The task is for David to execute a target computation $U^{\rm target}$ on a state $\psi^{\rm in}$, held by another remote party Christa. That is, the final output should be $U^{\rm target}\psi^{\rm in}(U^{\rm target})^\dag$, located at Christa's side.
The goal is to design a protocol for the two stations $A$ and $B$, who provide the data transmission service for Christa and David,  that reduces their total communication cost.
The setting is \emph{blind}, which means that the stations do not know $U^{\rm target}$ or $\psi^{\rm in}$ a priori. This is also the case in most practical applications, since the users of a communication link would usually demand their information to be kept private.

When the target computation is a unitary representation of a group $\grp{G}$ on a fixed Hilbert space, we can express it as
\begin{align}\label{target-decomp}
U_g^{\rm target}=V^\dag\left(\sum_{r\in\set{R}}|r\>\<r|_{\rm I}\otimes (U_g^r)_{\rm R}\otimes (I_{m_r})_{\rm M}\right)V\qquad g\in\grp{G}.
\end{align}
Here $U^r$ is an irreducible representation of $\grp{G}$ indexed by $r$, $V$ is a ($g$-independent) unitary gate, $\{|r\>\}$ is an orthonormal basis for indices of the irreducible representations, and $m_r$ denotes the multiplicity of the irreducible representation $U^r$ in the decomposition of $U$. In Eq.~(\ref{target-decomp}), the first register is referred to as the \emph{index register} {\rm I}, the second the \emph{representation register} {\rm R}, and the third the \emph{multiplicity register} {\rm M}. For instance, for $g\in\grp{SU}(2)$ we have $U_g^{\otimes n}\simeq \sum_{j=0}^{n/2}|j\>\<j|\otimes U^j_g\otimes I_{m_j}$ (assuming for simplicity $n$ to be even), where each irreducible representation $U^j$ is characterised by a spin number $j$. We will discuss more on special unitary groups in later sections.

To fulfil the computational task, a straightforward approach is to communicate both the index register and the representation register, and, as $U_g^{\rm target}$ acts trivially on it, the multiplicity register can be stored locally on Christa's side. The overall transmission cost, in terms of qubits, is thus twice of the cost of transmitting both registers. By merging these two registers into one (see Step 4 of Protocol \ref{protocol-rm} later), the cost can be reduced to
\begin{align}\label{maxcost}
c_{\max}=2\lceil\log d_{\rm tot}\rceil,
\end{align}
where 
\begin{align}\label{dtot-general}
d_{\rm tot}:=\sum_{r\in\set{R}}d_r
\end{align}
 and $\lceil\cdot\rceil$ denotes the ceiling function.\footnote{$\log:=\log_2$.}
In the decomposition (\ref{target-decomp}), the index register {\rm I} contains no information on the desired computation $g$. Meanwhile, it is usually pretty costly to transmit the register $\rm I$. For example, transmission cost of the index register for $U_g^{\otimes n} (g\in\grp{SU}(2), n\gg1)$, i.e., multiple parallel uses of a qubit gate, is nearly half of the total transmission cost. Nevertheless, for any input state living in more than one irreducible representations, the index register is indispensable.

Here we propose a probabilistic protocol for this task. Our protocol reduces the cost by executing the computation on the representation register based on an ansatz of $r$ and postselecting the case where the ansatz holds. In this way, the communication cost can be reduced (see Figure \ref{fig-repmatch}).

\begin{algorithm}[H]
   \begin{algorithmic}[1]
   \Require{  An arbitrary input state $|\psi^{\rm in}\>$ with 
   $V|\psi^{\rm in}\>\in\spc{H}^{\rm I}\otimes\spc{H}^{\rm R}\otimes\spc{R}$ on Christa's side, where $V$ is defined by Eq.~(\ref{target-decomp}). }
   \Ensure{ The quantum state $U_g^{\rm target}|\psi^{\rm in}\>$ at Christa's side with some probability; the successful case is heralded.}
  \State Station $A$ gets an input state, applies $V$, sends only the representation register $\rm R$ and stores locally the index register $\rm I$.
\State Station $B$ prepares the ansatz register $\rm A$ in the state:
\begin{align}\label{ansatz-state}
|f^{\rm ans}\>_{\rm A}:=\frac{1}{\sqrt{|\set{R}|}}\sum_{r\in\set{R}}|r\>_{\rm A}.
\end{align}
\State  Station $B$ performs $V^\dag$ [cf.~Eq.~(\ref{target-decomp})] on $\rm A$, $\rm R$, and an ancillary multiplicity register in a trivial state, asks David to perform $U^{\rm target}_g$ and then performs $V$. 
%Notice that the target computation  can also be probabilistic, and if it fails the whole protocol restarts from Step 1.
\State Station $B$ sends the output state back to station $A$. Notice that, although station $B$ has to send out two registers $\rm A$ and $\rm R$, the required communication is not necessarily the size of the two registers. Explicitly, since the state to be sent lives in a subspace of $\spc{H}^{\rm A}\otimes\spc{H}^{\rm R}$, there is an isometry $W: |r\>_{\rm A}\otimes |m\>_{\rm R}\to |l\>_{\rm C}$ encoding both registers into a memory, where $\{|m\>\}_{m=1}^{\max d_r}$ is a basis of the representation register and $\{|l\>\}_{l=0}^{\infty}$ is the Fock basis. The part where the state has no support can be cut off and the remaining part of the memory only has dimension $d_{\rm tot}$.
\State {\bf Coherent matching test.} Station $A$ restores both registers from the memory by applying $W^\dag$ and then performs the coherent matching test, which is a quantum operation $\{\map{M}_{\rm yes}, \map{M}_{\rm no}\}$, jointly on the ansatz register $\rm A$ and the index register $\rm I$, where 
\begin{align}
\map{M}_{\rm yes}(\cdot)&:=M_{\rm yes}(\cdot)M_{\rm yes}^\dag\\ 
M_{\rm yes}&:=\sum_{r}\<r|_{\rm A}\otimes|r\>\<r|_{\rm I}.
\end{align} 
\State Station $A$ applies $V^\dag$ and returns the state to Christa if the measurement outcome is ``yes''; otherwise restart from Step 1.
   \end{algorithmic}
   \caption{Representation matching protocol.}\label{protocol-rm}
\end{algorithm}

\begin{figure}  [bt]
\centering
  \includegraphics[width=\linewidth]{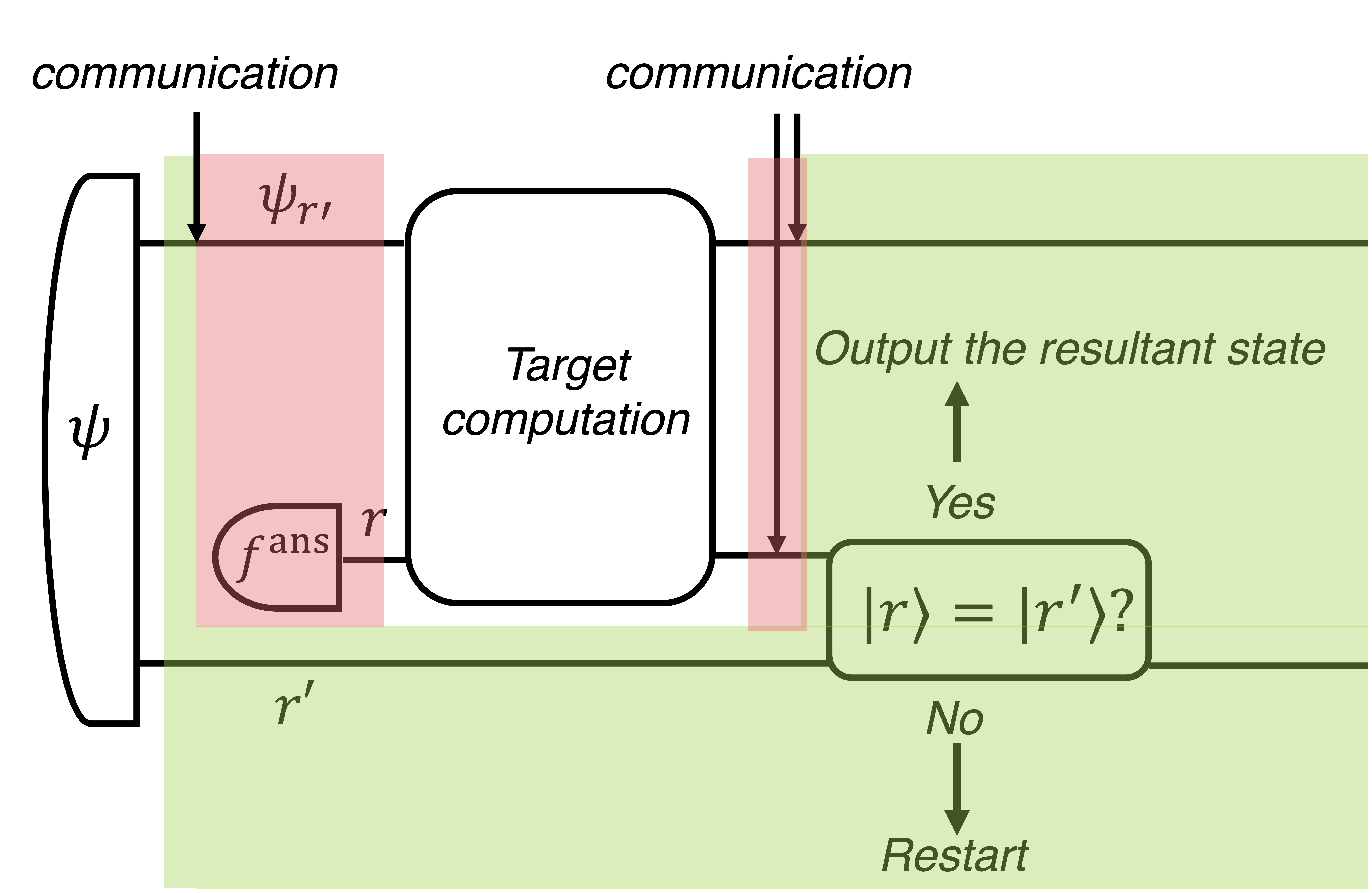}
\caption{{\bf Representation matching protocol.} The above figure illustrates the procedure of the representation matching protocol. The part in green describes the action of station $A$ and the part in red describes the action of Station $B$. Station $A$ sends only the representation register and stores the index register locally. Station $B$ prepares an ansatz state $|f^{\rm ans}\>$ and execute the target computation jointly on it as well as the representation register received from $A$. Station $B$ then returns the output to $A$, who performs a coherent matching test to see if the target computation has been performed correctly. In this way, the cost of transmitting the index register can be waived.  }\label{fig-repmatch}
\end{figure}

In Protocol \ref{protocol-rm}, the communication cost consists of two parts: the cost of sending $\rm R$, which has dimension  
\begin{align}\label{dR}
d_{\rm R}:=\max_{r\in\set{R}}d_r,
\end{align}
to station $B$ and the cost of sending both $\rm A$ and $\rm R$, which have dimension $d_{\rm tot}$ in total, back to station $A$.
Since the index register is only transmitted once,
the communication cost of the representation matching protocol is 
\begin{align}\label{costrm}
c_{\rm rm}=\lceil\log d_{\rm R}\rceil+\lceil\log d_{\rm tot}\rceil,
\end{align}
where $d_{\rm R}$ is defined by Eq.~(\ref{dR}) and $d_{\rm tot}$ is defined by Eq.~(\ref{dtot-general}).
Compared to Eq.~(\ref{maxcost}), the representation matching protocol achieves a reduction of
\begin{align}\label{costsaving}
\Delta c:=c_{\max}-c_{\rm rm}=\lceil\log d_{\rm tot}\rceil-\lceil\log d_{\rm R}\rceil
\end{align}
qubits. The price of the reduction is a risk of failure. The success probability of the representation matching protocol can be straightforwardly evaluated as
\begin{align}\label{probrm}
p_{\rm rm}=\frac{1}{|\set{R}|}.
\end{align}

There maybe some scenarios where the input state $\psi^{\rm in}$ from Christa contains precious information, e.g., outcomes of previous computations. 
As a result, Christa might not know $\psi^{\rm in}$ and cannot reprepare it arbitrarily. 
Since the representation protocol is probabilistic, it seems that $\psi^{\rm in}$ will be lost or corrupted if the protocol fails. 
Given this concern, Christa may want to avoid the corruption of $\psi^{\rm in}$. 
However, in the following we show that, if we modify the coherent matching test slightly, we can avoid corrupting $\psi^{\rm in}$
even when the representation protocol fails. 
The modification is based on the following observation: The stations extract no information on $\psi^{\rm in}$ by doing the coherent matching test, because the success probability (\ref{probrm}) and
the measurement outcome are independent of it 
(even though the post-measurement state does depend on $\psi^{\rm in}$).

To modify the coherent matching test, 
we now specify the quantum operation $\map{M}_{\rm no}$ for the failure case. We assume an additive group structure for $\set{R}$ and define the unitary $V_{\hat{r}}:=\sum_{r \in \set{R}}|r+\hat{r}\>\<r|_{\rm I}$
on $\spc{H}^{\rm I}$.
The set of the measurement outcomes in the modified coherent matching test is given as $\set{R}$.
The measurement operation $\map{M}_{\hat{r}}$ corresponding to the outcome $\hat{r} \in \set{R}$ is the following;
\begin{align}
\map{M}_{\hat{r}}(\cdot)&:=M_{\hat{r}}(\cdot)M_{\hat{r}}^\dag\\ 
M_{\hat{r}}&:=\sum_{r}\<r+\hat{r}|_{\rm A}\otimes|r\>\<r|_{\rm I}.
\end{align} 
The outcome $\hat{r}=0$ corresponds to ``yes'', and other outcomes correspond to ``no''.
The protocol continues even if the outcome is ``no''.
When the outcome $\hat{r}$ is observed, the resultant state is 
$V_{-\hat{r}}VU^{\rm target}_gV^\dag V_{\hat{r}}V|\psi^{\rm in}\>$.
Hence, if we apply the unitary $(V_{-\hat{r}}VU^{\rm target}_gV^\dag V_{\hat{r}}V)^{-1}$,
we can recover the original state $\psi^{\rm in}$, which can be considered as 
a special case of the quantum rewinding lemma \cite[Lemma 8]{watrous2009zero}.

When the outcome $\hat{r}$ is not zero, i.e., when the protocol fails,
the parties can go for another round and apply the representation protocol with the target unitary 
$U^{\rm target}_g(V^\dag V_{-\hat{r}}VU^{\rm target}_gV^\dag V_{\hat{r}}V)^{-1}
=U^{\rm target}_gV^\dag V_{-\hat{r}}V(U^{\rm target}_g)^\dag V^\dag V_{\hat{r}}V$
to the above resultant state, i.e.,
David applies the unitary $U^{\rm target}_gV^\dag V_{-\hat{r}}V(U^{\rm target}_g)^\dag V^\dag V_{\hat{r}}V$
in the second round.
When the outcome of the coherent matching test in the second round is
$0$, the resultant state is the desired state.
Therefore, it is possible to repeat our protocol for multiple rounds until it succeeds. 
%Second, we apply the quantum rewinding lemma \cite[Lemma 8]{watrous2009zero}: 
The probability that the protocol successes within $n$ rounds is $1-(1-p_{\rm rm})^n$, and thus we have:
\begin{remark}
The success probability of Protocol \ref{protocol-rm} can be amplified to $1-\epsilon$ for arbitrarily small $\epsilon$. This requires executing the protocol for $O\left(\frac{\log(\epsilon)}{\log (1-p_{\rm rm})}\right)$ rounds with target gates $U^{\rm target}_g$, $(U^{\rm target}_g)^\dag$, and $V_{r}$
where $p_{\rm rm}$ is given by Eq. (\ref{probrm}).
\end{remark}

In addition to the blind setting,
Protocol \ref{protocol-rm} works in another setting, the so called \emph{visible} setting, in which
station $B$ has access to the target group element $g\in\grp{G}$ and can execute any arbitrary $g$-dependent computation. 
Remember that station $B$ is only given black-box access to $U^{\rm target}_g$ in the blind setting.
Apparently, the visible setting may require a lower communication cost, as the constraint is more relaxed. 
In the following, we show a lower bound on the communication cost in the visible setting, and 
the lower bound is asymptotically achieved by Protocol \ref{protocol-rm} even in the blind setting.

\section{Lower bound on the communication cost}\label{sec-lowerbound}
Here we prove a lower bound on the communication cost of remote quantum computing, where one party performs a computation $U_g^{\rm target}$ for a state held by another far-away party.
We consider \emph{any arbitrary} deterministic or probabilistic protocol, i.e., we will show a bound on the required communication cost that holds no matter how small the success probability is.
We also consider the visible setting where, compared to the previous setting, station $B$ and David are now considered as one single party. 
Notice that any bound for the visible setting also holds in the blind setting, and thus the lower bound can be used to evaluate the performance of representation matching. 

In the visible setting, the action of station $A$ is still required to be independent of the input state, whereas the action of station $B$ (together with David) can be described by a quantum operation $\map{S}_{g}:L(\spc{H}^{{\rm M, in}})\to L(\spc{H}^{{\rm M, out}})$ acting on a memory system, as illustrated in Figure \ref{fig-visible}.
On the side of station $A$, an encoder is first performed to package (part of) the input state up and a decoder is performed after receiving the state from station $B$.
The encoder is given as an isometric quantum channel $\map{E}:L(\spc{H}^{\rm tot})\to L(\spc{H}^{{\rm M, in}} \otimes\spc{H}^{{\rm M}})$.
The decoder, on the other hand, is given as a quantum operation $\map{D}:L(\spc{H}^{{\rm M, out}}\otimes
\spc{H}^{{\rm M}})\to L(\spc{H}^{{\rm tot}})$. Here $\spc{H}^{\rm M}$ denotes the Hilbert space of a local memory. Notice that $\map{E}$ is assumed to be isometric without loss of generality, because any post-selection or partial-trace can be postponed to $\map{D}$.
The dimension of $\spc{H}^{{\rm M},x}$ ($x={\rm in, out}$) is denoted as $d_{{\rm M},x}$.

\begin{figure}  [bt]
\begin{center}
  \includegraphics[width=\linewidth]{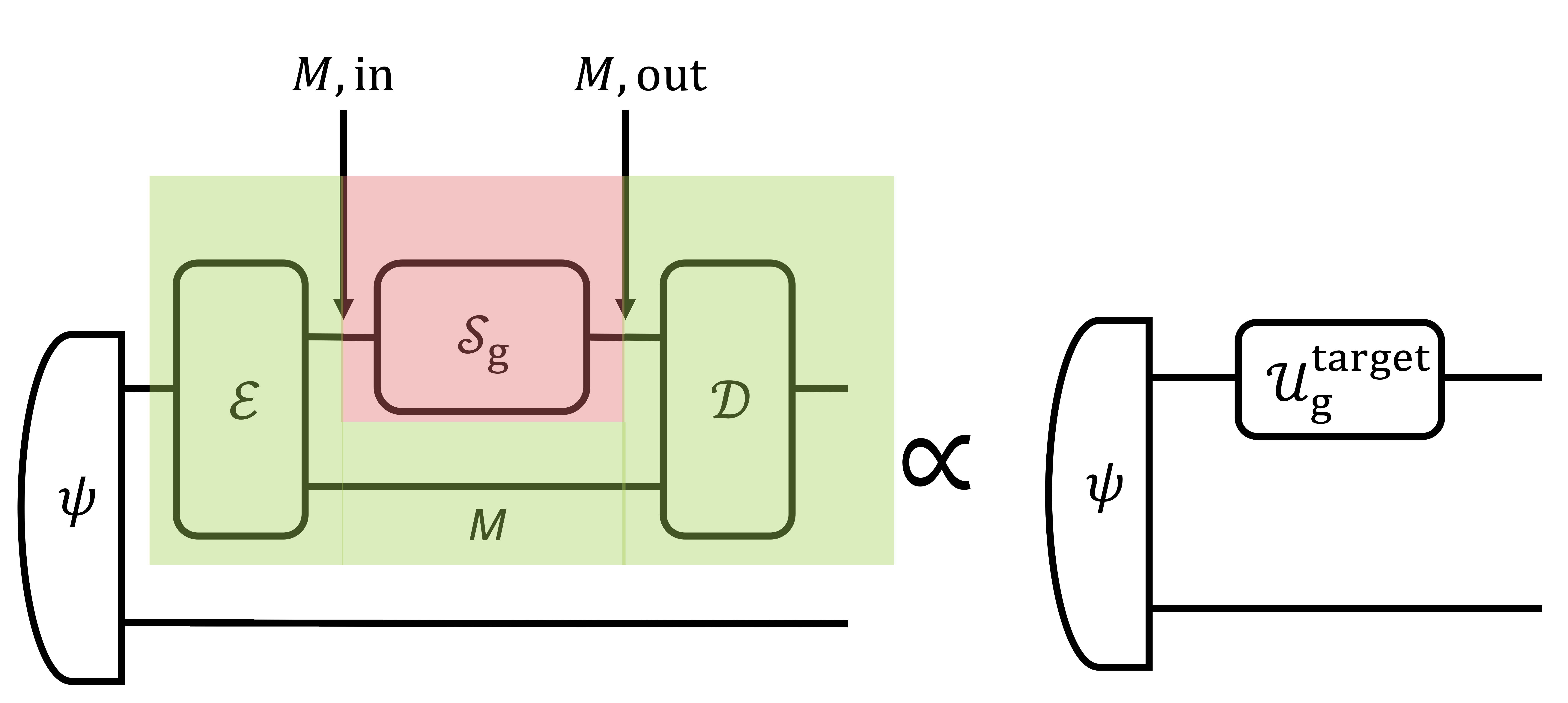}
  \end{center}
\caption{\label{fig-visible}
  {\bf The visible setting of remote quantum computing.} In the visible setting of gate compression, the goal is still to implement $U_{g}^{\rm target}$ on an input $\psi$ held by Christa. The only difference from the blind setting is that station $B$ knows which gate to perform or, equivalently, the value of $g$. Therefore, station $B$ is allowed to perform a generic $g$ dependent quantum operation, which we denote as $\map{S}_{g}$.}
\end{figure}

Here the overall communication cost is $c:=\lceil\log d_{\rm M, in}\rceil+\lceil \log d_{\rm M, out}\rceil$, regarding which we have the following result:
\begin{theo}\label{theo-lowerbound}
Consider probabilistic remote computing of the target gate $U_g^{\rm target}$ as defined by Eq.~(\ref{target-decomp}). The total communication cost in the visible setting is lower bounded as
\begin{align}
c \ge c_{\min}:=\lceil\log d_{\rm tot,sq}\rceil
\end{align}
regardless of the success probability, where 
\begin{align}
d_{\rm tot,sq}:=\sum_{r\in\set{R}}d_r^2.
\end{align}
\end{theo}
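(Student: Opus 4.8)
The plan is to reduce the statement to the single inequality $d_{\rm M,in}\,d_{\rm M,out}\ge d_{\rm tot,sq}$ on the bare memory dimensions. The bound on $c$ then follows from $\lceil x\rceil+\lceil y\rceil\ge\lceil x+y\rceil$ applied with $x=\log d_{\rm M,in}$, $y=\log d_{\rm M,out}$, giving $c\ge\lceil\log(d_{\rm M,in}d_{\rm M,out})\rceil\ge\lceil\log d_{\rm tot,sq}\rceil$ by monotonicity of $\lceil\log\cdot\rceil$. The conceptual heart is then a dimension count in $L(\spc{H}^{\rm tot})$. I would first record that the operators $\{U_g^{\rm target}:g\in\grp{G}\}$ span a subspace $\set{W}:=\Span\{U_g^{\rm target}\}$ of dimension exactly $d_{\rm tot,sq}$. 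This is where the structure (\ref{target-decomp}) enters: by Schur orthogonality the matrix coefficients of the inequivalent irreducibles $U^r$ are linearly independent and, for each fixed $r$, already exhaust $L(\spc{H}^r)$; conjugating by the fixed $V$ and restoring the index and multiplicity registers yields $\dim\set{W}=\sum_{r\in\set{R}}d_r^2=d_{\rm tot,sq}$. The goal is therefore to prove $\dim\set{W}\le d_{\rm M,in}d_{\rm M,out}$.

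Next I would exploit that a correct probabilistic implementation forces a rank-one structure. On the heralded success branch the overall operation $\map{D}\circ(\map{S}_g\otimes\map{I}_{\rm M})\circ\map{E}$ must equal $p_g\,U_g^{\rm target}(\cdot)(U_g^{\rm target})^\dag$ for some $p_g>0$; since a completely positive branch is linear in the input, $p_g$ is automatically input-independent, so the associated Choi operator is proportional to a rank-one projector. Writing the encoder isometry $E$, the Kraus operators $\{S_{g,a}\}$ of $\map{S}_g$, and the Kraus operators $\{D_k\}_{k\in\set{S}}$ of the success branch of $\map{D}$, the composite Kraus operators are $G_{g,k,a}:=D_k(S_{g,a}\otimes I_{\rm M})E\in L(\spc{H}^{\rm tot})$, and rank-one-ness of the Choi operator forces every one of them to be proportional to $U_g^{\rm target}$, i.e. $G_{g,k,a}=\gamma_{g,k,a}\,U_g^{\rm target}$. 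The key point is that the only $g$-dependent ingredient is $S_{g,a}$, which lives in the fixed space $L(\spc{H}^{\rm M,in},\spc{H}^{\rm M,out})$ of dimension $d_{\rm M,in}d_{\rm M,out}$.

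I would finish with a linear-algebra argument converting this into the dimension bound. Consider the fixed linear map $\Lambda:L(\spc{H}^{\rm M,in},\spc{H}^{\rm M,out})\to\bigoplus_{k\in\set{S}}L(\spc{H}^{\rm tot})$, $\Lambda(X):=(D_k(X\otimes I_{\rm M})E)_{k\in\set{S}}$, whose image has dimension at most $d_{\rm M,in}d_{\rm M,out}$. Evaluating on a Kraus operator yields the product form $\Lambda(S_{g,a})=|\gamma_{g,a}\>\otimes U_g^{\rm target}$ with $|\gamma_{g,a}\>:=(\gamma_{g,k,a})_{k\in\set{S}}$, since each component equals $\gamma_{g,k,a}U_g^{\rm target}$. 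Because $p_g>0$, for every $g$ some $a$ makes $|\gamma_{g,a}\>\ne0$. Choosing a basis $\{U_{g_i}^{\rm target}\}_{i=1}^{d_{\rm tot,sq}}$ of $\set{W}$ and, for each $i$, such an index $a(g_i)$, the vectors $\{|\gamma_{g_i,a(g_i)}\>\otimes U_{g_i}^{\rm target}\}_i$ are linearly independent (a short lemma: if the $U_{g_i}^{\rm target}$ are independent and the left factors are nonzero, the products are independent, seen by pairing with the dual functionals of the $U_{g_i}^{\rm target}$). As they all lie in $\mathrm{im}\,\Lambda$, we conclude $d_{\rm tot,sq}\le\dim\mathrm{im}\,\Lambda\le d_{\rm M,in}d_{\rm M,out}$.

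The step I expect to be the main obstacle is this last one: the decoder is a genuine quantum operation and may have several success Kraus operators $\{D_k\}_{k\in\set{S}}$, so a naive count along a single Kraus path carries a spurious multiplicity factor and only recovers the weaker bound $c\ge\tfrac12\log d_{\rm tot,sq}$. The device that removes the factor is precisely the factorization $\Lambda(S_{g,a})=|\gamma_{g,a}\>\otimes U_g^{\rm target}$, which absorbs the entire set $\set{S}$ into the left tensor factor while the dimension is counted only over the $g$-dependent input $X=S_{g,a}$; establishing the independence of these products is the technical crux that makes the tight constant $d_{\rm tot,sq}$ (rather than its square) appear.
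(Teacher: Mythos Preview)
Your argument is correct and follows essentially the same route as the paper's Appendix~\ref{app-lower-bound}: both establish $d_{\rm M,in}d_{\rm M,out}\ge d_{\rm tot,sq}$ by (i) deducing from the zero-error condition that every Kraus path of the success branch is a scalar multiple of $U_g^{\rm target}$, (ii) noting that the only $g$-dependent factor lives in $L(\spc{H}^{\rm M,in},\spc{H}^{\rm M,out})$, and (iii) invoking $\dim\Span_g\{U_g^{\rm target}\}=\sum_{r}d_r^2$ via linear independence of irreducible matrix coefficients. The packaging differs mildly---the paper uses Stinespring dilations, views matrix coefficients as a function space over $\grp{G}$, and fixes a single environment index pair $(i,j)$ to write $U_g^{\rm target}=V_{\map{D},j}V_{g,i,j}V_{\map{E}}$, whereas you work directly with Kraus operators and bundle all decoder indices into the linear map $\Lambda$ with the factorization $\Lambda(S_{g,a})=|\gamma_{g,a}\>\otimes U_g^{\rm target}$---but the substance is the same, and your treatment of the multiple-Kraus-operator issue via that factorization is if anything a bit more explicit than the paper's.
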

The proof can be found in Appendix~\ref{app-lower-bound}. We stress that Theorem \ref{theo-lowerbound} is the ultimate limit for \emph{all protocols}. That is, no matter how small success probability we allow, the lower bound always holds.

By Theorem \ref{theo-lowerbound}, the representation matching protocol (cf.~Protocol \ref{protocol-rm}) has at most an overhead of 
\begin{align}\label{costoverhead}
\delta c:=c_{\rm rm}-c_{\min} = \lceil\log d_{\rm R}\rceil+\lceil\log d_{\rm tot}\rceil-\lceil\log d_{\rm tot,sq}\rceil.
\end{align}
As we will show soon, in many concrete applications the overhead $\delta c$ is only a few qubits, and thus Protocol \ref{protocol-rm} is almost optimal in communication efficiency.

% where we prove a more general version (see Theorem \ref{theo-general-lowerbound}) of the above theorem.

%In the following sections, we present two applications of Protocol \ref{protocol-rm}, which are $i)$ compression of unitary gate arrays and $ii)$ conjugation of unitary gates.

%The main takeaway from Theorem \ref{theo-lowerbound} is that the minimum communication cost scales at least as $(d^2-1)\log n$ with $n$. In the following, we will present a concrete protocol that achieves this optimal scaling.

\section{Compression of gate arrays}\label{sec-example-compression}
In the following, we demonstrate several concrete applications of Protocol \ref{protocol-rm}.
The first task we consider is the compression of an array of $n$ identical $d$-dimensional unitary gates, i.e., $U_g^{\otimes n}$ with $g\in\grp{SU(d)}$. By the Schur-Weyl duality [Eq.\ (\ref{sw-duality})], there exists a unitary (the Schur transform) $U_{\rm Sch}$ transforming $U_g^{\otimes n}$ into the block diagonal form:
\begin{align}\label{schur-decomp}
U_{\rm Sch} U_{g}^{\otimes n}U_{\rm Sch}^\dag=\sum_{\lambda\in\set{R}_n}|\lambda\>\<\lambda|_{\rm I}\otimes \left(U_{g}^{\lambda}\right)_{\rm R}\otimes\left(I_{m_\lambda}\right)_{\rm M},
\end{align}
where $U_{g}^{\lambda}$ is now the $\grp{SU(d)}$ irreducible representation characterised by the Young diagram $\lambda$ and $m_\lambda$ is dimension of the $\grp{S}(n)$ representation, now serving as the multiplicity of $U_{g}^{\lambda}$. 

In the remote computing setting, it amounts to that David would like to run $n$ parallel uses of $U_g$ on a remote state of Christa. 
Such a setting is also frequently encountered in quantum sensor networks \cite{komar2014quantum,proctor2018multiparameter}, where the preparation of the sensor state and the application of the unknown unitary gates happen at different locations.
The objective is to communicate $U_g^{\otimes n}$ with lower cost, i.e., to ``compress'' the gate array. 
Protocol \ref{protocol-rm} can be readily applied to fulfil the task, with $U^{\rm target}_g$ [cf.\ Eq.~(\ref{target-decomp})] being $U_g^{\otimes n}$ for $g\in\grp{SU}(d)$.
 
Next, we discuss the performance of Protocol \ref{protocol-rm}.
Using Eq.~(\ref{probrm}), the probability of success is 
\begin{align}
p_{\rm rm}=\frac{1}{|\set{R}_n|}.
\end{align} 
Therefore, by using the bound (\ref{bound-Rn}), we have
\begin{align}\label{p-yes}
p_{\rm rm}\ge\left(\frac{1}{n+1}\right)^{d-1}.
\end{align}

Meanwhile, combining Eq.~(\ref{costrm}) with Eqs.\ (\ref{dR-bound}) and (\ref{dtot-bound}), the communication cost of Protocol \ref{protocol-rm} is upper bounded by
\begin{align}\label{c-tot}
c_{\rm rm}&\le(d^2-1)\log\left(n+1\right)+2,
\end{align}
which, according to Theorem \ref{theo-lowerbound}, attains the optimal scaling in $n$. Let us now examine the communication cost saving $\Delta c$, defined by Eq.~(\ref{costsaving}), which equals the gap between $c_{\rm rm}$ and $c_{\max}$ (the communication cost without compression).  Observing that $d_{\rm tot}\ge d_{\rm tot,sq}/d_{\rm R}$, we have
\begin{align}
\Delta c &\ge \log\left(\frac{d_{\rm tot}}{d_{\rm R}}\right) -1\\
&\ge \log\left(\frac{d_{\rm tot,sq}}{2d^2_{\rm R}}\right) \\
&\ge \log\left(\frac{{n+d^2-1\choose d^2-1}}{2(n+1)^{d(d-1)}}\right),
\end{align}
having used Eqs.~(\ref{dR-bound}) and (\ref{dast-def}) in the last step. In the large $n$ asymptotics, the above implies that
\begin{align}
\Delta c &\ge (d-1)\log n-O(1).
\end{align}
On the other hand, employing Eqs.~(\ref{bound-Rn}) and (\ref{dtot-bound}), we have
\begin{align}
\Delta c &\le \log\left(\frac{d_{\rm tot}}{d_{\rm R}}\right) +1\\
&\le \log\left|\set{R}_n\right|+1 \\
&\le (d-1)\log (n+1)+1.
\end{align}
Summarising the above inequalities, we have identified the scaling of $\Delta c$ as
\begin{align}\label{Deltac-sud}
\Delta c &= (d-1)\log n+O(1).
\end{align}

  \begin{figure}[t]
\subfigure[]{
\includegraphics[width=0.95\linewidth]{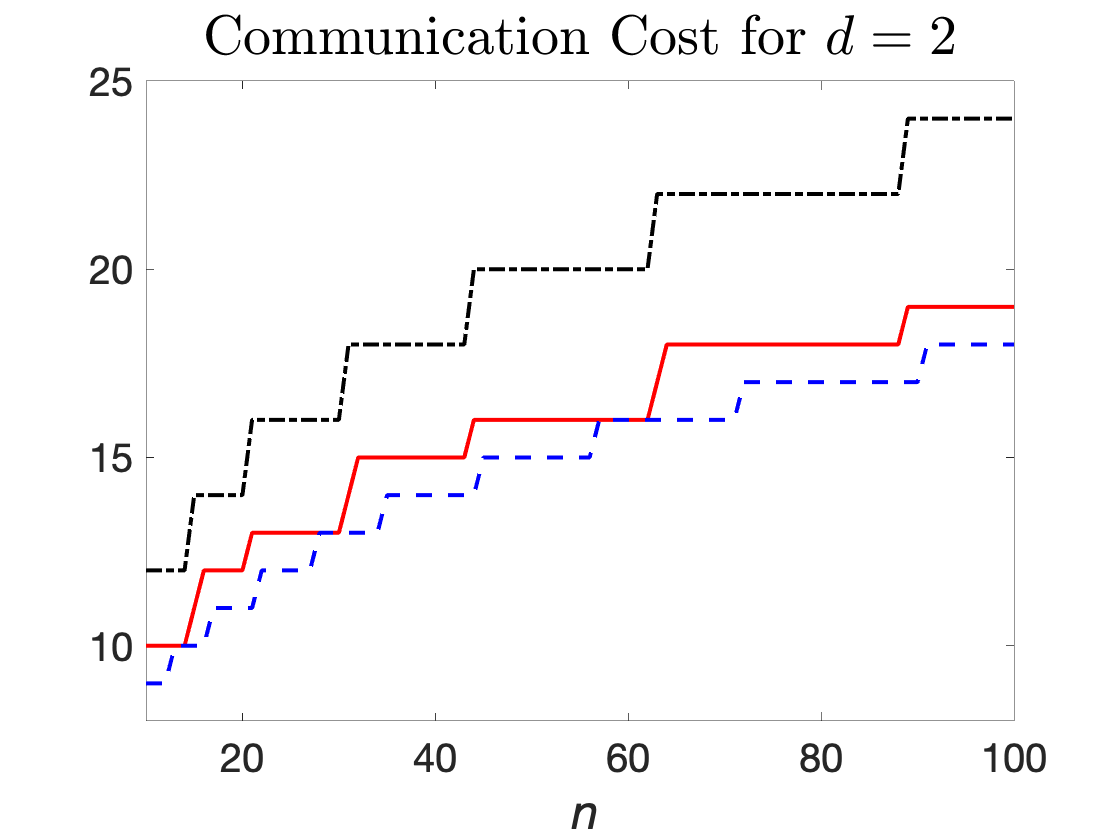}\label{d=2}}\\
\subfigure[]{
\includegraphics[width=0.95\linewidth]{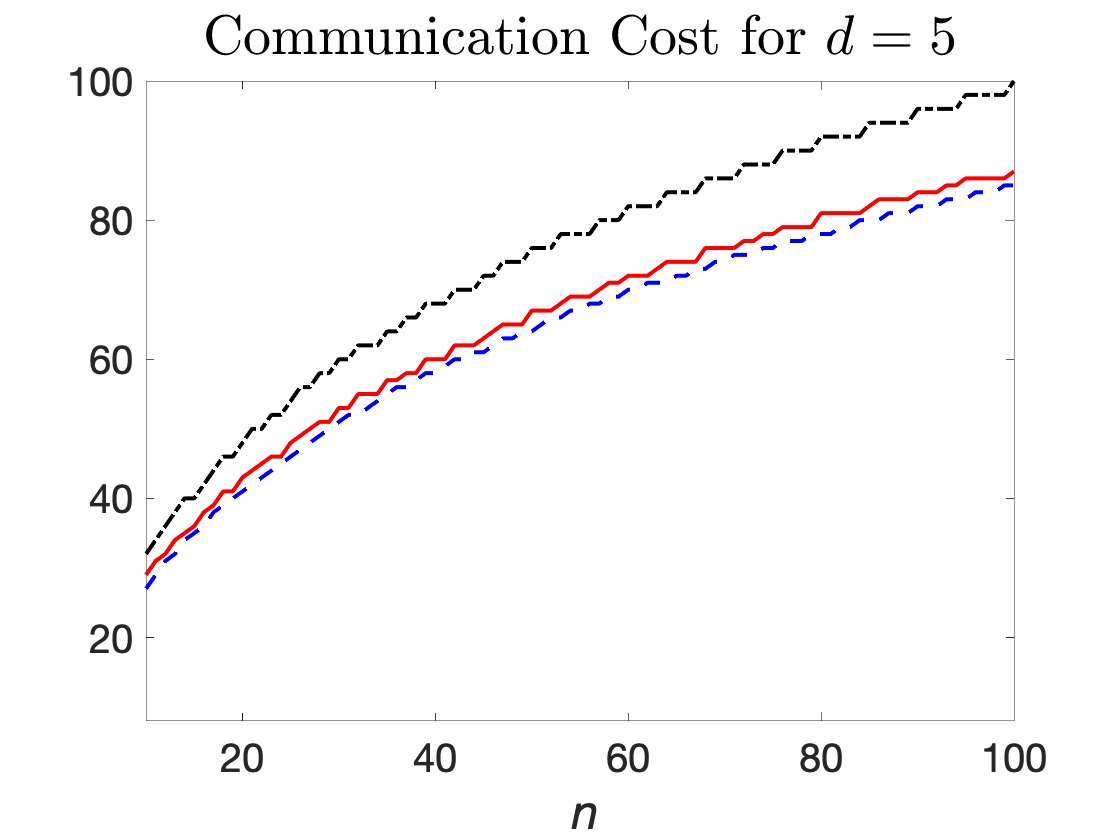}\label{d=5}}
\caption{{\bf Communication cost of Protocol \ref{protocol-rm}.} Communication cost of compressing $U_g^{\otimes n}$ ($g\in\grp{SU}(d)$) is plotted as a function of $n$ for (a) $d=2$ and (b) $d=5$. In both plots, the red, solid lines represent $c_{\rm rm}$, the cost of representation matching (Protocol \ref{protocol-rm}), the blue, dashed lines represent $c_{\min}$, the lower bound of the cost given by Theorem \ref{theo-lowerbound}, and the black, dash-dotted lines represent $c_{\max}$, the original cost. Notice that Protocol \ref{protocol-rm} is asymptotically optimal: at $n=100$, the overhead $\delta c:=c_{\rm rm}-c_{\min}$ is only one for $d=2$ and two for $d=5$.}
\label{fig:costcomparison}
\end{figure} 

Next we show the asymptotic optimality of Protocol \ref{protocol-rm}.
According to Eq.~(\ref{dast-asymp}), the gap between the cost of Protocol \ref{protocol-rm}  and the lower bound $c_{\min}$ is at most 
\begin{align}\label{delta-sud}
\delta c=O(1)
\end{align}
 qubits.
In particular, consider the qubit case, where $g\in\grp{SU(2)}$. Assuming first $n$ to be even, the lower bound can be explicitly evaluated as
\begin{align}
c_{\min} &=\left\lceil \log\left(\sum_{j=0}^{n/2}(2j+1)^2\right)\right\rceil\nonumber\\
& \ge \log\left(\frac{1}{6}(n+1)(n+2)(n+3)\right).
\end{align}
On the other hand, the cost of Protocol \ref{protocol-rm} is
\begin{align}
c_{\rm rm}&=\left\lceil\log\left(n+1\right)\right\rceil+\left\lceil\log\left(\sum_{j=0}^{n/2}(2j+1)\right)\right\rceil\nonumber\\
&\le\log\left(\frac{1}{4}(n+1)(n+2)^2\right)+2.
\end{align}
The overhead is
\begin{align}
\delta c=c_{\rm rm}-c_{\min}\le\log\left(\frac{6(n+2)}{(n+3)}\right)<\log 6.
\end{align}
Similarly, for $n$ odd one can also show that $\delta <\log 6$.
Therefore the overhead is no more than two qubits. In Figure \ref{fig:costcomparison}, we numerically compared the communication cost $c_{\rm rm}$ with $c_{\min}$ and $c_{\max}$, from which one can observe the asymptotic optimality of Protocol \ref{protocol-rm} that matches the above discussion. 

%\onecolumngrid 

We summarise the performance into the following theorem:
\begin{theo}\label{theo-protocol-performance-compression}
Protocol \ref{protocol-rm} fulfils the task of  compressing $U_g^{\otimes n}$ ($g\in\grp{SU}(d)$) perfectly. The total communication cost is given by Eq.\ (\ref{c-tot}) and attains the optimal scaling in $n$. The success probability is given by Eq.\ (\ref{p-yes}) and scales as $n^{-(d-1)}$.
\end{theo}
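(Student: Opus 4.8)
The plan is to establish the three assertions---perfect (exact) implementation conditioned on success, the communication cost with its optimal $n$-scaling, and the success-probability scaling---essentially by assembling the dimensional bounds already derived and, for the one substantive point, by tracing the global state through Protocol~\ref{protocol-rm} specialised to $U^{\rm target}_g=U_g^{\otimes n}$ with $V=U_{\rm Sch}$.

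First I would verify perfectness by tracking the (unnormalised) state block by block. Writing the Schur-transformed input as $U_{\rm Sch}|\psi^{\rm in}\rangle=\sum_\lambda|\lambda\rangle_{\rm I}|\psi_\lambda\rangle_{\rm RM}$ with $|\psi_\lambda\rangle_{\rm RM}\in\spc{H}^\lambda\otimes\spc{M}^\lambda$, I follow it through Steps~2--3: after station $B$ appends the ansatz $|f^{\rm ans}\rangle_{\rm A}$ and applies $U_{\rm Sch}(U_g^{\otimes n})U_{\rm Sch}^\dagger$ with the \emph{ansatz} register $\rm A$ playing the role of the index register, the net operation on $(\rm A,\rm R,\rm M')$ is, by Eq.~(\ref{schur-decomp}), the $\rm A$-controlled gate $\sum_\lambda|\lambda\rangle\langle\lambda|_{\rm A}\otimes(U_g^\lambda)_{\rm R}\otimes I_{\rm M'}$, where $\rm M'$ is $B$'s fresh trivial multiplicity ancilla. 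Acting on the global state and then applying the coherent matching test $M_{\rm yes}=\sum_\nu\langle\nu|_{\rm A}\otimes|\nu\rangle\langle\nu|_{\rm I}$ retains exactly the synchronised branch $\lambda=\nu$, collapsing the state to $\frac{1}{\sqrt{|\set{R}_n|}}\sum_\lambda|\lambda\rangle_{\rm I}(U_g^\lambda)_{\rm R}|\psi_\lambda\rangle_{\rm RM}$, which is precisely $\frac{1}{\sqrt{|\set{R}_n|}}\,U_{\rm Sch}U_g^{\otimes n}|\psi^{\rm in}\rangle$ up to the discardable $|0\rangle_{\rm M'}$. The final $U_{\rm Sch}^\dagger$ in Step~6 then returns the exact target $U_g^{\otimes n}|\psi^{\rm in}\rangle$, with squared norm $1/|\set{R}_n|$ independent of $|\psi^{\rm in}\rangle$; this simultaneously confirms perfectness and recovers $p_{\rm rm}=1/|\set{R}_n|$ of Eq.~(\ref{probrm}).

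The remaining two assertions follow by substitution. Instantiating $c_{\rm rm}=\lceil\log d_{\rm R}\rceil+\lceil\log d_{\rm tot}\rceil$ from Eq.~(\ref{costrm}) with the Schur--Weyl bounds Eqs.~(\ref{dR-bound}) and (\ref{dtot-bound}) yields Eq.~(\ref{c-tot}); comparing with $c_{\min}=\lceil\log d_{\rm tot,sq}\rceil=(d^2-1)\log n+O(1)$ from Theorem~\ref{theo-lowerbound} and Eq.~(\ref{dast-asymp}) shows $c_{\rm rm}$ matches $c_{\min}$ up to an $O(1)$ overhead, hence attains the optimal scaling in $n$. For the probability, Eq.~(\ref{bound-Rn}) gives the stated bound Eq.~(\ref{p-yes}), $p_{\rm rm}\ge(n+1)^{-(d-1)}$, while the matching estimate $|\set{R}_n|=\Theta(n^{d-1})$ (the number of $d$-row Young diagrams of $n$ boxes grows polynomially of this order) pins the scaling at $n^{-(d-1)}$.

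I expect the only part beyond routine substitution to be the perfectness verification, and the delicate point there is bookkeeping: one must keep track that it is register $\rm A$---not the physically separated index register $\rm I$ retained at station $A$---that drives the controlled gate at station $B$, and that the coherent matching test is exactly what synchronises the two so that only the matched branch survives with the \emph{correct} irrep applied. Once the block-diagonal structure of Eq.~(\ref{schur-decomp}) is invoked, Christa's multiplicity register $\rm M$ and $B$'s ancilla $\rm M'$ remain spectators and factor out, so the computation is clean; the main care is simply to confirm that the mismatched branches are annihilated rather than merely suppressed, which is what guarantees exactness as opposed to approximate success.
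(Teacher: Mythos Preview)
Your proposal is correct and follows essentially the same approach as the paper: the theorem is a summary of the preceding discussion in Section~\ref{sec-example-compression}, which specialises the general cost formula Eq.~(\ref{costrm}) and success probability Eq.~(\ref{probrm}) to the Schur--Weyl setting via the dimensional bounds Eqs.~(\ref{bound-Rn}), (\ref{dR-bound}), (\ref{dtot-bound}), and compares with the lower bound of Theorem~\ref{theo-lowerbound} through Eq.~(\ref{dast-asymp}). Your explicit state-tracking argument for perfectness is more detailed than what the paper spells out (the paper leaves this implicit in the protocol description of Section~\ref{sec-rm}), and your remark that pinning the scaling at exactly $n^{-(d-1)}$ requires the matching lower bound $|\set{R}_n|=\Theta(n^{d-1})$ is a point the paper glosses over, but otherwise the logic is identical.
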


We can compare the performance of Protocol \ref{protocol-rm} to another protocol, which is based on the gate teleportation approach \cite{gottesman1999demonstrating,bartlett2003quantum}. One can retrieve a $d$ dimensional unitary gate $U_g$ from the maximally entangled state $|\Phi^+_g\>:=(U_g\otimes I)|\Phi^+\>$ ($|\Phi_+\>:=\sum_i (1/\sqrt{d})|i\>\otimes|i\>$), by performing a generalised Bell test jointly on part of it and an arbitrary input state $|\psi\>$:
\begin{align}
(I\otimes B_j)(|\Phi^+_g\>\otimes|\psi\>)=\frac{1}{d}(U_g\sigma_j|\psi\>)\otimes(I\otimes\sigma_j)|\Phi^+\>
\end{align}
for $j=0,\dots,d^2-1$, where $B_j:=(I\otimes\sigma_j)\Phi^+(I\otimes\sigma_j)$ and $\sigma_j$ is a generalised Pauli operator. In particular, $\sigma_0:=I$ is the identity. Therefore, with probability $(1/d)^2$ we get the outcome $j=0$ and $U_g|\psi\>$ as desired. An naive approach of compressing gate arrays runs as follows:
\begin{enumerate}
  \item Station $B$ let David apply each of the $n$ unitary gates in the array on $|\Phi^+\>$ and sends the resultant state, which is $|\Phi^+_g\>^{\otimes n}$, to the station $A$.
  \item Station $A$ performs gate teleportation locally with the received state and the input from Christa.
\end{enumerate}
Apparently, this approach fares much worse than the representation matching protocol in both the communication cost and the success probability. Indeed, the communication cost is $n\cdot\lceil\log(d^2)\rceil$, which is exponentially higher, and the success probability is $(1/d)^{2n}$, which vanishes exponentially in $n$.

Instead, we can consider an improved version of the gate teleportation based approach by exploiting the decomposition (\ref{schur-decomp}):
\begin{algorithm}[H]
   \begin{algorithmic}[1]
  \State   Station $B$ applies $U_{\rm Sch}$, asks David to perform $U_{g}^{\otimes n}$, and performs $U_{\rm Sch}^\dag$ in sequential order on registers $\rm A$, $\rm R_1$, and $\rm M$ of the state
\begin{align}\label{maxent-state}
|\Phi^+_n\>_{\rm AR_1R_2M}:=\sum_{\lambda\in\set{R}_n}\sqrt{\frac{d_{\lambda}}{d_{\rm tot}}}|\lambda\>_{\rm A}\otimes |\Phi^+_\lambda\>_{\rm R_1R_2}\otimes|\eta_0\>_{\rm M},
\end{align}
where $|\Phi^+_\lambda\>$ is the maximally entangled state on $\spc{H}_\lambda\otimes\spc{H}_\lambda$ and $|\eta_0\>$ is a fixed (but otherwise arbitrary) state. Station $B$ then sends out the resultant state $|\Phi_{{g,n}}\>$ to station $A$.
\State Station $A$ performs a generalised Bell measurement $\{B_j\}_{j=0}^{d_{\rm tot}^2-1}$ ($B_0=|\Phi^+_n\>\<\Phi^+_n|$) jointly on part of $|\Phi_{g,n}\>$ and the input state from Christa.  The protocol is successful if and only if the Bell measurement yields $j=0$.
   \end{algorithmic}
   \caption{Gate teleportation based compression of $U_{g}^{\otimes n}$.}\label{protocol-tele}
\end{algorithm}

  \begin{figure}[t]
%\subfigure[]{
\includegraphics[width=0.95\linewidth]{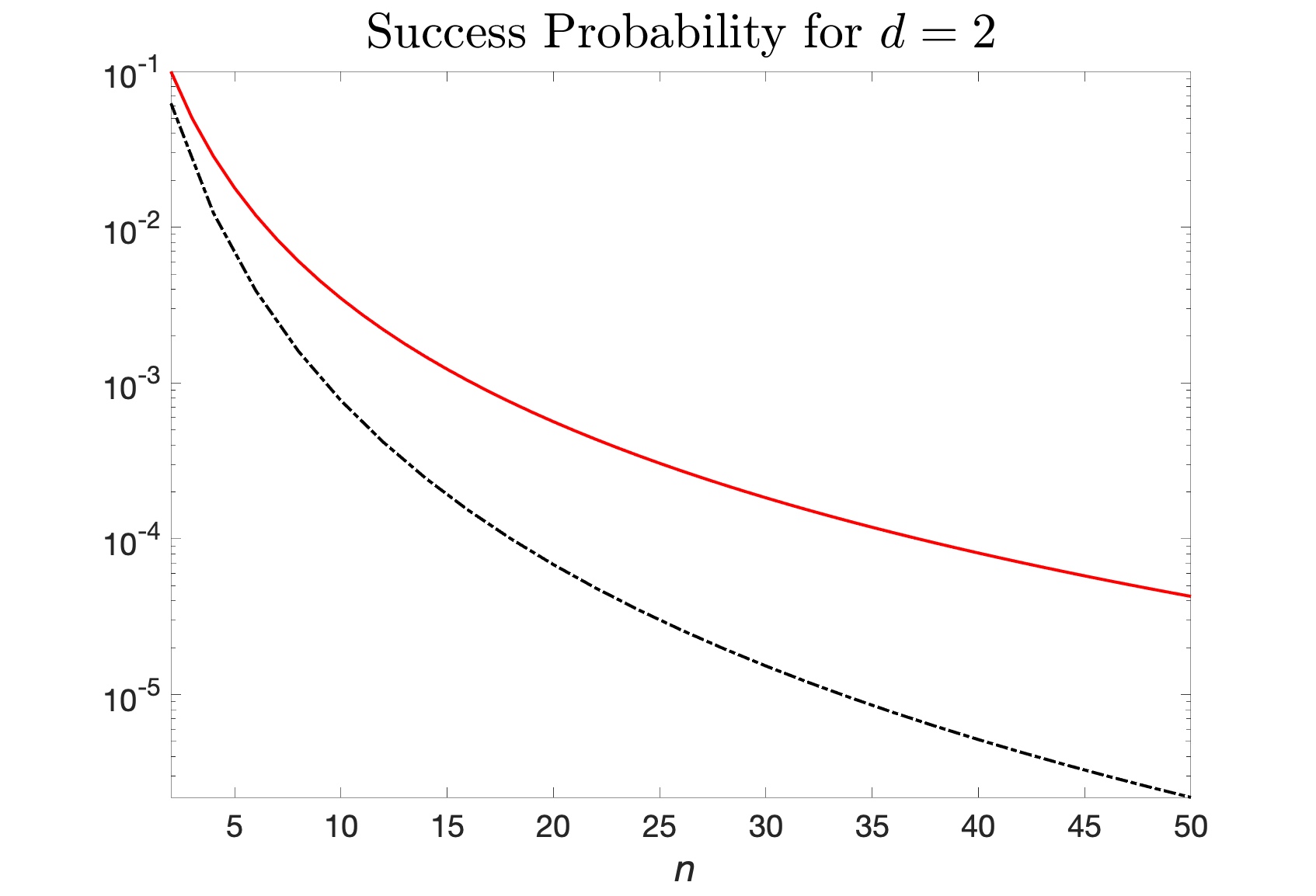}
%}\\\subfigure[]{\includegraphics[width=0.95\linewidth]{probd=4.png}}
\caption{{\bf Success probability of Protocols \ref{protocol-rm} and \ref{protocol-tele}.} Success probability of compressing $U_g^{\otimes n}$ ($g\in\grp{SU}(d)$) is plotted as a function of $n$ for $d=2$.  The red, solid line represents $p_{\rm rm}$, the success probability of representation matching (Protocol \ref{protocol-rm}), while the black, dash-dotted line represents $p_{\rm tele}$, the success probability of gate teleportation (Protocol \ref{protocol-tele}).}
\label{fig:probcomparison}
\end{figure}

The communication cost for the teleportation-based protocol is $c_{\min}$, which is very close to that of Protocol \ref{protocol-rm} according to Eq.~(\ref{delta-sud}). However, as the success probability of gate teleportation is inverse proportional to the square of the system dimension, the success probability of Protocol \ref{protocol-tele} is only 
\begin{align}
p_{\rm tele}=\frac{1}{d^2_{\rm tot}} 
\end{align}
where $d_{\rm tot}$ is defined by Eq.\ (\ref{dtot-def}).
In contrast, Protocol \ref{protocol-rm} has a much higher probability of success. The ratio between the two success probabilities is
\begin{align}
\frac{p_{\rm rm}}{p_{\rm tele}}=O\left(n^{d^2-1}\right).
\end{align}
The advantage of Protocol \ref{protocol-rm} is obvious even in the non-asymptotic regime, as illustrated in Figure \ref{fig:probcomparison}.
In conclusion, our representation matching protocol outperforms even the improved version of gate teleportation in the task of gate array compression.

\section{Compression of permutation gates}\label{sec-perm}
Quantum computation consisting of only permutations of subsystems has gained increasing interest for it offers new insight into topological quantum computing \cite{marzuoli2005computing,jordan2010permutational,planat2017magic,havlivcek2018quantum,ouyang2020faster}.
 In the following, we show that Protocol \ref{protocol-rm} does not only apply to $\grp{SU}(d)$, but also to finite groups like the permutation group of $n$ particles $\grp{S}(n)$.

Consider all permutations of $n$ qudits. By the Schur-Weyl duality [cf.\ Eq.~(\ref{sw-duality})], we can decompose a permutation unitary $V_g^n$ with $g\in\grp{S}(n)$ as 
\begin{align}
U_{\rm Sch} V_{g}^{n}U_{\rm Sch}^\dag=\sum_{\lambda\in\set{R}_n}|\lambda\>\<\lambda|_{\rm I}\otimes (I_{d_\lambda})_{\rm M}\otimes (V_{g}^{\lambda,n})_{\rm R},
\end{align}
where $U_{\rm Sch}$ is the Schur transform, $\set{R}_n$ is defined by Eq.~(\ref{def-Rn}), $V_{g}^{\lambda,n}$ is the irreducible representation associated with Young diagram $\lambda$, and $d_\lambda$ is the dimension of the $\grp{SU}(d)$ irreducible representation $U^\lambda$, now serving as the multiplicity of $V_{g}^{\lambda,n}$. In particular, the dimension of $V_{g}^{\lambda,n}$, $m_\lambda$, is given by Eq.~(\ref{dim-sym-irreps}).

Since $m_\lambda$ are usually quite large, even the minimum communication cost, given by Theorem \ref{theo-lowerbound}, grows linearly in $n$ instead of $\log n$. Nevertheless, Protocol \ref{protocol-rm} is still capable of reducing a moderate amount of communication cost. Here we examine the quantity $\Delta c$, defined by Eq.~(\ref{costsaving}), which is the communication cost saving achieved by Protocol \ref{protocol-rm}. In Figure \ref{fig:costperm}, one can see that $\Delta c$ grows as a function of $n$. In the meantime,  the gap $\delta c$ between the cost of Protocol \ref{protocol-rm} and the minimum cost (see Theorem \ref{theo-lowerbound}) remains very low (less than three qubits).

  \begin{figure}[t]
%\subfigure[]{
\includegraphics[width=0.95\linewidth]{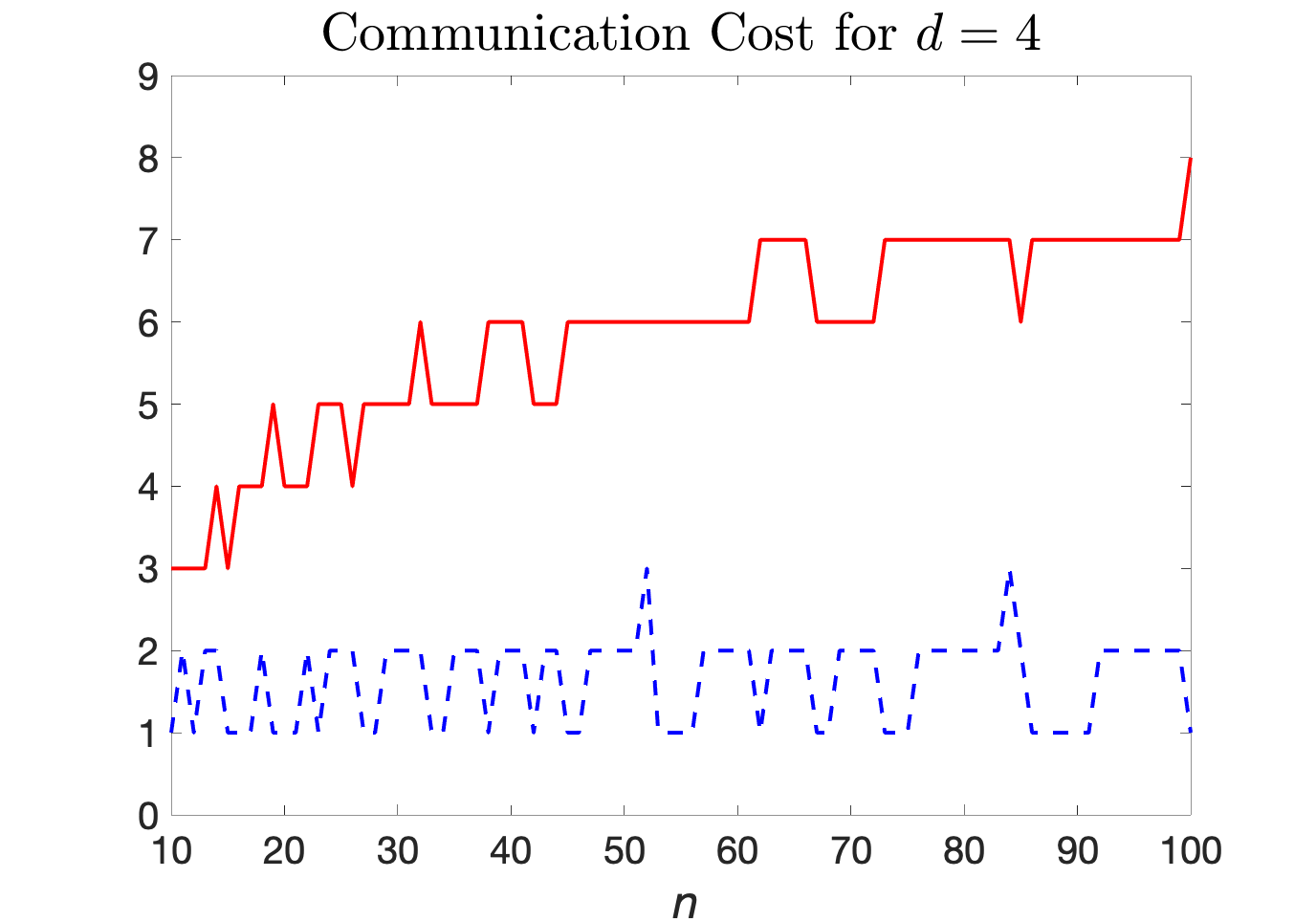}
%}\\\subfigure[]{\includegraphics[width=0.95\linewidth]{probd=4.png}}
\caption{{\bf Cost reduction of Protocol \ref{protocol-rm} for permutation gates.} The red, solid line represents the communication cost saving of Protocol \ref{protocol-rm} [cf.\ Eq.~(\ref{costsaving})] for the task of compressing permutation gates of $n$ qudits for $d=4$.  The blue, dashed line represents the gap $\delta c$ [cf.\ Eq.~(\ref{costoverhead})].}
\label{fig:costperm}
\end{figure} 

We can also analytically characterise the scaling of $\Delta c$.
Consider the case of qubits. The dimension of each spin-$j$ irreducible representation (\ref{dim-sym-irreps}) reduces to
\begin{align}
m_j=\frac{2j+1}{n+1}{n+1 \choose \frac{n}{2}-j}.
\end{align}
For large $n$, the maximum of $m_j$ is achieved with $j\sim\sqrt{n}$. On the other hand, one can verify that 
\begin{align}
d_{\rm tot}=\sum_{j=0}^{\frac{n}{2}}m_j=\frac{n+2}{n+1}{n+1 \choose \frac{n}{2}},
\end{align}
having assumed $n$ even for simplicity. By Eq.~(\ref{costsaving}), we have
\begin{align}
\Delta c&\ge \log\left(\frac{d_{\rm tot}}{d_{\rm R}}\right) -1\\
&\ge\frac{1}{2}\log n - O(1).\label{Deltac-sn}
\end{align}
Therefore, the communication cost saving of Protocol \ref{protocol-rm} goes to infinity with $n$. In particular, comparing (\ref{Deltac-sn}) to Eq.~(\ref{Deltac-sud}), we conclude that, for large $n$, the communication cost saving achieved by Protocol \ref{protocol-rm} for permutation gates is at least half of the saving for unitary gate arrays, when $d=2$.

\iffalse
For $d>2$, the expression of $d^{\rm sym}_\lambda$ becomes complex. Still, we can carry out an asymptotic analysis, by first noticing that $d_{\rm tot}\ge d_{\rm tot,sq}/d_{\rm R}$. Since $d_{\rm tot,sq}=n!$ [???] and $\log d_{\rm R}\sim ??$, we have
\begin{align}
\Delta c&\ge \log\left(\frac{d_{\rm tot,sq}}{d^2_{\rm R}}\right) -1\\
&\ge(d-1)\log n - O(1).
\end{align}
\fi

\section{Conjugation of unitary gates}\label{sec-example-conj}

Besides compression, Protocol \ref{protocol-rm} also fits the task of conjugating quantum gates remotely. Consider the same remote quantum computing setting as in previous sections, but the goal now is to execute $n$ parallel uses of the gate $U_g^\ast$ for an unknown $g\in\grp{SU}(d)$ on arbitrary input state of Christa, with David performing $m$ parallel uses of $U_g$. Here $U^\ast$ denotes the complex conjugate of $U$. 

%In another word, the task is to simulate the inverse of a quantum gate, given multiple instances of it in the black-box form.

From representation theory, it is know that the complex conjugate of an irreducible representation $\lambda$ of $\grp{SU}(d)$ is isomorphic to the irreducible representation with Young diagram $\bar{\lambda}$. Here $\bar{\lambda}$ is the Young diagram \emph{associated to} $\lambda$, which is obtained by changing the box number of each column from $k$ to $d-k$. That is, for every $\lambda$, there exists a unitary $V^\lambda$ so that
\begin{align}\label{conjugate}
(U^\lambda)^\ast=V^\lambda U^{\bar{\lambda}} (V^\lambda)^\dag.
\end{align}
For example, the two-dimensional irreducible representation of $\grp{SU(2)}$ satisfies $U^\ast=\sigma_y U\sigma_y$, where $\sigma_y$ is the Pauli-$y$ matrix.

In order to apply our approach, we need the computation to include $U^{\bar{\lambda}}$ for every $\lambda\in\grp{SU}(d)$. In particular, $\bar{\lambda}$ associated to $\lambda=(n,0,\dots,0)$ has $(d-1)n$ boxes. It is also straightforward that all other associated Young diagrams have more boxes. Therefore, we need $U_g^{\otimes m}$ with $m=(d-1)n$ to implement the computation $(U_g^\ast)^{\otimes n}$.

The computation can be implemented remotely and probabilistically using Protocol \ref{protocol-rm}, where for Step 3 station $B$   performs (by querying $U_g$ $m$ times from David)
\begin{align}
U&=VU_{\rm Sch} U_g^{\otimes m}U_{\rm Sch}^\dag V^\dag\\
V&:=\sum_{\lambda\in\set{R}_n}|\lambda\>\<\lambda|\otimes V^\lambda
\end{align}
with $V^\lambda$ defined by Eq.~(\ref{conjugate}) and $m=(d-1)n$. The communication cost and the success probability are exactly the same as in the case of unitary gate array compression, and the (asymptotic) optimality can be shown in the same way using Theorem \ref{theo-lowerbound}. Therefore we have:
\begin{theo}\label{theo-protocol-performance-conjugation}
Protocol \ref{protocol-rm} fulfils the task of the gate conversion $U_g^{\otimes m}\to (U_g^\ast)^{\otimes n}$  perfectly for every $m\ge (d-1)n$. The total communication cost is given by Eq.\ (\ref{c-tot}) and attains the optimal scaling in $n$. The success probability is given by Eq.\ (\ref{p-yes}) and scales as $n^{-(d-1)}$.
\end{theo}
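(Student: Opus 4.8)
The plan is to show that the conjugation task has exactly the same index structure as the gate-array compression of Theorem \ref{theo-protocol-performance-compression}, so that once I verify that station $B$ can realise the required representation-register action from copies of $U_g$, all the cost and success-probability estimates carry over verbatim. First I would identify the target $U^{\rm target}_g=(U_g^\ast)^{\otimes n}$ with the canonical form (\ref{target-decomp}): by the Schur--Weyl duality the index set is $\set{R}=\set{R}_n$, the representation-register irreps are $(U^\lambda)^\ast$ with $\lambda\in\set{R}_n$, and the multiplicity register carries the dual $\grp{S}(n)$ spaces. Crucially, complex conjugation preserves dimensions, so $\dim(U^\lambda)^\ast=d_\lambda$; hence $|\set{R}|$, $d_{\rm R}=\max_\lambda d_\lambda$, $d_{\rm tot}=\sum_\lambda d_\lambda$, and $d_{\rm tot,sq}=\sum_\lambda d_\lambda^2$ are \emph{identical} to those of the compression task.

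The technical core is Step 3 of Protocol \ref{protocol-rm}: station $B$ must implement $\sum_{\lambda}|\lambda\>\<\lambda|\otimes(U^\lambda)^\ast$ on the representation register while only querying $U_g$ from David. Here I would invoke the isomorphism (\ref{conjugate}), $(U^\lambda)^\ast=V^\lambda U^{\bar\lambda}(V^\lambda)^\dag$, and reduce the problem to producing every conjugate irrep $U^{\bar\lambda}$ inside a single parallel block $U_g^{\otimes m}$. The key representation-theoretic claim is that $m=(d-1)n$ copies suffice: writing $\bar\lambda$ as the column complement of $\lambda$, its box count is $d\lambda_1-n$, which is congruent to $-n$ modulo $d$ for every $\lambda$ and is maximised by the single-row diagram $\lambda=(n,0,\dots,0)$ at $(d-1)n$. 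Since a full column of height $d$ acts trivially for $\grp{SU}(d)$, each $\bar\lambda$ can be padded by $n-\lambda_1$ such columns to reach exactly $m=(d-1)n$ boxes, so $U^{\bar\lambda}$ genuinely appears in $U_g^{\otimes m}$; the same padding argument covers any larger admissible $m$. Conjugating by $V=\sum_\lambda|\lambda\>\<\lambda|\otimes V^\lambda$ after the Schur transform then yields exactly the desired representation-register action. I expect this containment/box-counting step to be the main obstacle, as it is the only genuinely new input relative to the compression analysis.

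With the representation structure in hand, the remaining claims follow immediately. Because every operation used --- the Schur transform, the isometry $V$, and the coherent matching test --- is exact, the protocol outputs $(U_g^\ast)^{\otimes n}|\psi^{\rm in}\>$ with no approximation whenever the matching test succeeds, establishing perfectness. Since the dimensional factors coincide with the compression case, the communication cost obeys the same bound (\ref{c-tot}) via (\ref{dR-bound}) and (\ref{dtot-bound}), the success probability equals $1/|\set{R}_n|$ and thus satisfies (\ref{p-yes}) through (\ref{bound-Rn}), and optimality in $n$ follows from Theorem \ref{theo-lowerbound} together with the asymptotics (\ref{dast-asymp}), exactly as in the proof of Theorem \ref{theo-protocol-performance-compression}.
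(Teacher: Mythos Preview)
Your proposal is correct and follows essentially the same approach as the paper: identify the target $(U_g^\ast)^{\otimes n}$ with the canonical form via Schur--Weyl, use the isomorphism $(U^\lambda)^\ast\cong U^{\bar\lambda}$ to realise Step~3 by querying $U_g^{\otimes m}$ with $m=(d-1)n$, and then inherit the cost, success probability, and optimality verbatim from Theorem~\ref{theo-protocol-performance-compression} and Theorem~\ref{theo-lowerbound}. Your box-counting argument ($|\bar\lambda|=d\lambda_1-n\le(d-1)n$, pad by $n-\lambda_1$ full columns) is in fact more explicit than the paper's terse justification of the threshold $m=(d-1)n$.
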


\section{Storage and retrieval of gate arrays} \label{sec-sr}

The idea of representation matching applies not only to quantum networks over space, which we have considered in the previous sections, but also to quantum networks over time. Here, as a typical example, we apply representation matching to the task of storage and retrieval of unitary gate arrays \cite{sedlak2019optimal}. As shown in Figure \ref{fig-storageretrieval}, the goal is to store $m$ instances of an unknown unitary gate $U_g$ and to retrieve $n(\le m)$ instances later on. In comparison to the remote quantum computing, the task is for David to apply a computation on a quantum state held by Christa, who locates at the future of David.
This task is also known as quantum learning \cite{bisio2010optimal,mo2019quantum} and has recently been shown to be closely related to the optimal programming of quantum gates \cite{yang2020optimal}.

\begin{figure}  [bt]
\begin{center}
  \includegraphics[width=\linewidth]{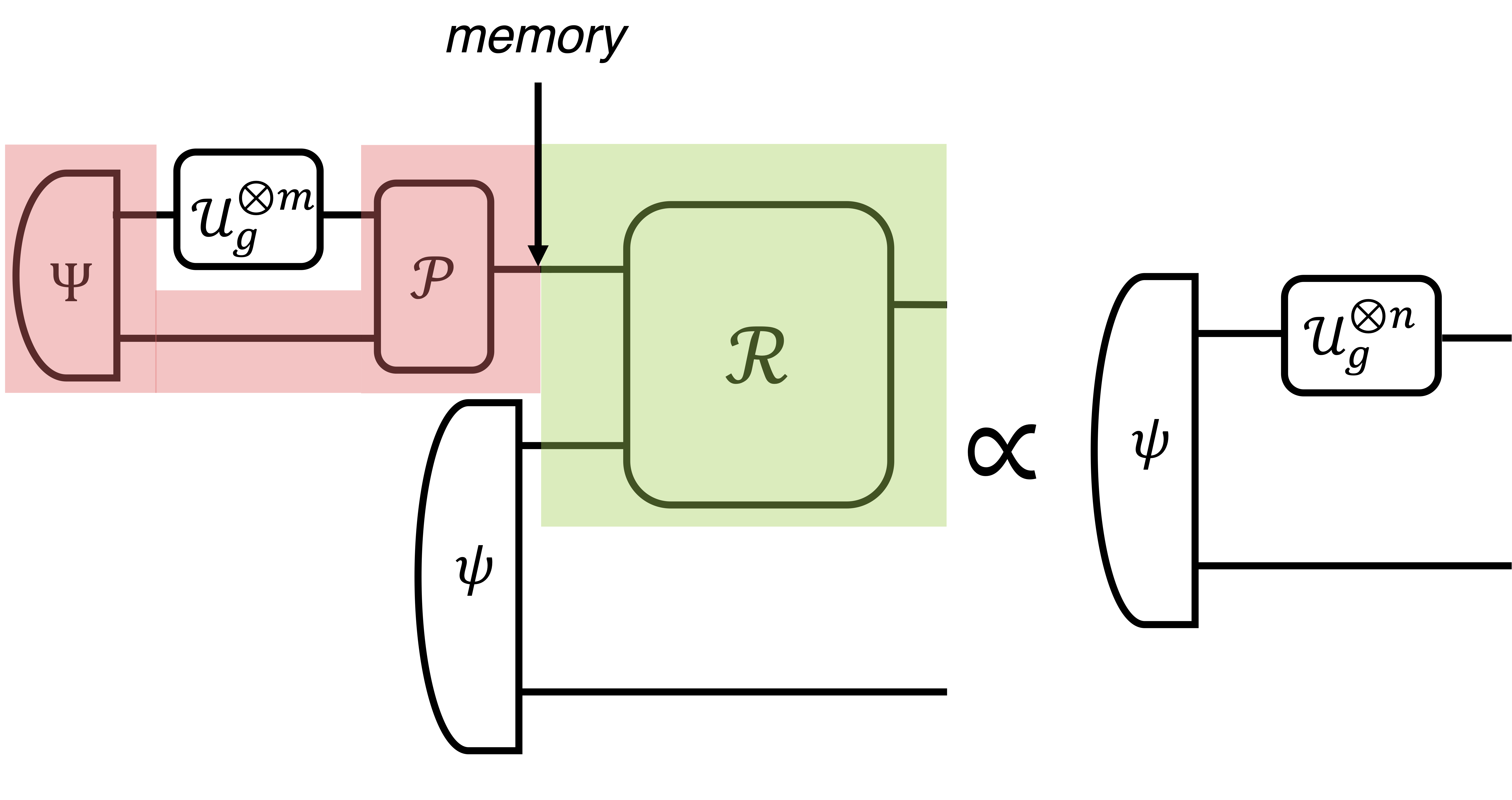}
  \end{center}
\caption{\label{fig-storageretrieval}
  {\bf Storage and retrieval of unitary gates.} The task of storage-and-retrieval of unitary gates is illustrated. The task is separate into two stages: storage (in red) and retrieval (in green). In the storage stage, $U_g^{\otimes m}$ ($g\in\grp{SU}(d)$) is applied on (part of) a state $\Psi$ of the memory, followed by post-processing $\map{P}$. In the retrieval stage, a retrieval operation $\map{R}$ is applied to extract $U_g^{\otimes n}$ from the memory and apply it on an arbitrary input state. The whole process is probabilistic and (heralded) failure is allowed.  }
\end{figure}

Here we employ the idea of representation matching and propose the following protocol for storage and retrieval of gate arrays:
\begin{algorithm}[H]
   \begin{algorithmic}[1] 
\State (Storage.) Apply $(U_{\rm Sch} U_{g}^{\otimes n}U_{\rm Sch}^\dag)_{\rm AR_1M}\otimes I_{\rm R_2}$ [see Eq.~(\ref{schur-decomp})] on the memory state
\begin{align}
|\Psi_n\>_{{\rm AR_1R_2}}:=\sum_{\lambda\in\set{R}_n}\frac{d_{\lambda}}{\sqrt{d_{\rm tot,sq}}}|\lambda\>_{A}\otimes |\Phi^+_\lambda\>_{{\rm R_1R_2}}\otimes|\eta_0\>_{\rm M},
\end{align}
 where $|\eta_0\>$ is an arbitrary fixed state. Store the resultant state  into a memory register.
\State (Retrieval.) To apply the stored unitary on any input state, which can be cast into the form 
\begin{align}\label{psi-in}
|\psi\>=\sum_{\lambda\in\set{R}_n}c_{\lambda}|\lambda\>_{\rm I}\otimes|\psi_\lambda\>_{\rm RP}
\end{align}
where $\rm P$ is a purification register and $\sum_{\lambda\in\set{R}_n}|c_\lambda|^2=1$.
 perform a quantum operation $\{\map{N}_{\rm yes}, \map{N}_{\rm no}\}$ jointly on $\rm A$, $\rm I$, $\rm R$ and $\rm R_2$. The successful operation is defined as 
\begin{align}
\map{N}_{\rm yes}(\cdot)&:=N_{\rm yes}(\cdot)N_{\rm yes}^\dag\\ 
N_{\rm yes}&:=\sum_{\lambda\in\set{R}_n}\<\lambda|_{\rm I_1}\otimes|\lambda\>\<\lambda|_{\rm I}\otimes  \<\Phi^+_{\lambda}|_{\rm RR_2}.
\end{align}
%\State The protocol is successful if and only if $\map{N}_{\rm yes}$ is performed.
   \end{algorithmic}
   \caption{Probabilistic storage and retrieval of $U_{g}^{\otimes n}$.}\label{protocol-sr}
\end{algorithm}

Next, we analyse the performance of our protocol.
The input state  $|\psi\>$ [see Eq.\ (\ref{psi-in})] and the memory state that stores $U_{g}^{\otimes n}$ can be jointly expressed as
\begin{align} 
|\psi'\>=\sum_{\lambda',\lambda\in\set{R}_n}\frac{c_{\lambda}d_{\lambda'}}{\sqrt{d_{\rm tot,sq}}}|\lambda'\>_{\rm A}\otimes|\lambda\>_{\rm I}\otimes|\psi_\lambda\>_{\rm RP}\otimes|\Phi^+_{g,\lambda'}\>_{\rm R_1R_2}.
\end{align}
The role of the quantum operation $\map{N}_{\rm yes}$ is two-fold: First, it check as if $\rm A$ and $\rm I$ are in the same state; second, if so, it performs the gate teleportation, extracting $U_{g,\lambda}$ from $\rm R_1$ and applying it on $\rm R$. Therefore, Protocol \ref{protocol-sr} integrates our representation matching idea into the gate teleportation approach.
The protocol succeeds, if and only if both the representation matching and the gate teleportation are successfully performed, and then the state becomes 
\begin{align} 
|\psi_{\rm out}\>=\sum_{\lambda\in\set{R}_n}c_{\lambda}|\lambda\>_{\rm I}\otimes(U_{g}^{\lambda}\otimes I_{\rm P})|\psi_\lambda\>_{\rm RP},
\end{align}
which is exactly as desired. Therefore, the protocol has no error.

For the gate retrieval-and-storage protocol,
the probability of success is given by
\begin{align}
p_{\rm rs}&=\Tr\left[\map{N}_{\rm yes}\otimes\map{I}_{\rm P R_1}(\psi)\right]=\frac{1}{d_{\rm tot,sq}}.\label{p-yes-sr}
\end{align}
Meanwhile, the protocol requires a quantum memory of size 
\begin{align}\label{c-rs}
c_{\rm rs}=\left\lceil\log d_{\rm tot,sq}\right\rceil=(d^2-1)\log n+O(1),
\end{align} 
having used Eq.~(\ref{dast-asymp}). 
On the other hand, a lower bound of the required memory size can be determined using Theorem \ref{theo-lowerbound}. Indeed, notice that a storage-and-retrieval protocol can always be employed to fulfil a remote computation. To do so, station $B$ stores the gate in a memory and sends it to station $A$ who retrieve it from the memory later, and the communication cost of such a protocol is equal to the size of the memory. Therefore, the lower bound in Theorem \ref{theo-lowerbound} also applies here to the storage-and-retrieval task. By comparing Eq.~(\ref{c-rs}) and  Theorem \ref{theo-lowerbound} we find that
\begin{align}
c_{\rm rs}=c_{\min}.
\end{align}
Therefore, Protocol \ref{protocol-sr} is optimal in memory efficiency for the task of storage-and-retrieval of unitary gates.
Summarising, we have:
\begin{theo}\label{theo-protocol-performance}
Protocol \ref{protocol-sr} fulfils the task of storage-and-retrieval of $U_g^{\otimes n}$ ($g\in\grp{SU}(d)$) perfectly. The memory cost is given by Eq.\ (\ref{c-rs}) and is optimal. The success probability is given by Eq.\ (\ref{p-yes-sr}) and scales as $n^{-(d^2-1)}$.
\end{theo}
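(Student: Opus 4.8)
The plan is to establish the three assertions of the theorem — zero error (``perfectly''), optimality of the memory cost, and the stated scaling of the success probability — by directly computing the unnormalised state $N_{\rm yes}|\psi'\>$ together with its norm. The natural starting point is the joint input-plus-memory state $|\psi'\>$ written out just before the theorem: the storage step of Protocol~\ref{protocol-sr} stamps the gate into the maximally entangled blocks $|\Phi^+_{g,\lambda'}\>_{\rm R_1R_2}$ while tagging each block by $|\lambda'\>_{\rm A}$, and the retrieval operator $N_{\rm yes}$ (Protocol~\ref{protocol-sr}) simultaneously performs the representation matching (the factor $\<\lambda|_{\rm A}\,|\lambda\>\<\lambda|_{\rm I}$) and a block-diagonal gate teleportation (the Bell projection $\<\Phi^+_\lambda|_{\rm RR_2}$).

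For the zero-error and success-probability claims I would carry out the contraction block by block. Applying $N_{\rm yes}$ to $|\psi'\>$, the matching factor forces $\lambda'=\lambda$, collapsing the double sum to a single index; within the surviving block the key identity is the standard gate-teleportation relation
\begin{align}
\<\Phi^+_\lambda|_{\rm RR_2}\big(|\psi_\lambda\>_{\rm RP}\otimes (U_g^\lambda\otimes I)|\Phi^+_\lambda\>_{\rm R_1R_2}\big)=\frac{1}{d_\lambda}(U_g^\lambda\otimes I_{\rm P})|\psi_\lambda\>_{\rm R_1P},
\end{align}
verified by expanding $|\Phi^+_\lambda\>=d_\lambda^{-1/2}\sum_i|i\>|i\>$ and contracting indices. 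Substituting this together with the block amplitude $c_\lambda d_\lambda/\sqrt{d_{\rm tot,sq}}$, the two $d_\lambda$ factors cancel and one finds $N_{\rm yes}|\psi'\>=d_{\rm tot,sq}^{-1/2}\sum_\lambda c_\lambda|\lambda\>_{\rm I}\otimes(U_g^\lambda\otimes I_{\rm P})|\psi_\lambda\>$, which coincides with $d_{\rm tot,sq}^{-1/2}\,U_g^{\otimes n}|\psi\>$ expressed in the Schur basis [cf.\ Eq.~(\ref{schur-decomp})]; hence the heralded output is exactly the target state and the protocol is error-free. Taking the squared norm and using $\sum_\lambda|c_\lambda|^2=1$ [Eq.~(\ref{psi-in})] gives $p_{\rm rs}=\|N_{\rm yes}|\psi'\>\|^2=1/d_{\rm tot,sq}$, manifestly independent of the input, and Eq.~(\ref{dast-asymp}) then yields the claimed scaling $n^{-(d^2-1)}$.

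For the memory cost I would observe that the stored state has support only on the $\sum_\lambda d_\lambda^2=d_{\rm tot,sq}$-dimensional subspace spanned by $\{|\lambda\>_{\rm A}\otimes|\Phi^+_{g,\lambda}\>_{\rm R_1R_2}\}$ (the index is one-dimensional per block, while each entangled block is $d_\lambda^2$-dimensional), so after discarding the zero-support part the memory need only hold $\lceil\log d_{\rm tot,sq}\rceil$ qubits, giving Eq.~(\ref{c-rs}). Optimality then follows from the reduction already sketched in the text: any storage-and-retrieval scheme yields a visible remote-computing protocol whose communication cost equals the memory size, so Theorem~\ref{theo-lowerbound} lower-bounds the latter by $c_{\min}=\lceil\log d_{\rm tot,sq}\rceil$, which matches $c_{\rm rs}$ exactly.

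The main obstacle is modest and computational: it lies in checking the teleportation identity carefully within each irreducible block — in particular keeping track of the transpose/conjugation convention of $|\Phi^+_\lambda\>$ so that the gate emerges as $U_g^\lambda$ rather than its transpose — and in confirming that the $d_\lambda$-dependent teleportation weight combines with the storage amplitude $d_\lambda/\sqrt{d_{\rm tot,sq}}$ to make the success weight flat in $\lambda$. This flatness is precisely what renders $p_{\rm rs}$ input-independent and the success event genuinely heralded, and it is the one point where the specific choice of storage amplitudes $d_\lambda/\sqrt{d_{\rm tot,sq}}$ in Protocol~\ref{protocol-sr} is essential rather than incidental.
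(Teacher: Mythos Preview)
Your proposal is correct and follows essentially the same route as the paper: you write out the joint state $|\psi'\>$, apply $N_{\rm yes}$ block by block to verify that the heralded output equals $U_g^{\otimes n}|\psi\>$, read off $p_{\rm rs}=1/d_{\rm tot,sq}$ from the squared norm, and then invoke the reduction to the visible remote-computing setting together with Theorem~\ref{theo-lowerbound} for optimality of the memory cost. The only difference is cosmetic --- you spell out the gate-teleportation identity and the cancellation of the $d_\lambda$ factors more explicitly than the paper does --- but the argument is the same.
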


Our result extended that of Ref.~\cite{sedlak2019optimal}, where the $n\to1$ task (i.e., storing $n$ uses and retrieving one use) was considered.
We conclude this section by remarking that adaptations of Protocol \ref{protocol-sr} can be applied in other tasks. 
For instance, one can consider the problem of reversing quantum computation: the gate conversion $U_g^{\otimes n}\to (U_g^\dag)^{\otimes n}$.
To this end, notice that $U^\dag=(U^T)^\ast$, where $U^\ast$ and $U^T$ are the complex conjugate and the transpose, respectively, of $U$. We thus divide the gate reversion task into two separate steps: $U\to U^\ast$ and $U^\ast\to (U^\ast)^T$. The former can be accomplished using techniques in Section \ref{sec-example-conj}, while the latter can be achieved with a variant of Protocol \ref{protocol-sr}.

\section{Conclusion and discussion}\label{sec-conclusion}
We studied how to reduce the communication or memory cost for certain types of computational tasks in quantum internet. Our main contribution is twofold. For one thing, we derived a lower bound on the cost. For another, we proposed representation matching, a generic probabilistic protocol capable of asymptotically achieving our lower bound in many practical scenarios. In addition, the success probability of representation matching is also much higher than protocols based on existing ideas, e.g., gate teleportation.
Compared to existing protocols on remote quantum computing, e.g. \cite{PhysRevA.81.062315,yu2016implementation}, our protocols make use of more specific structure of the problem (i.e. to implement multiple uses of a gate drawn from a group) to achieve higher cost reduction.

From the previous examples, one can see that, while the communication cost reduction grows with $n$, the success probability also vanishes. One may ask if there are certain values of $n$ where even the \emph{average} cost of representation matching, i.e., $c_{\rm rm}/p_{\rm rm}$ can be lower than the deterministic protocol. The answer is negative:
Indeed, according to Eqs.~(\ref{maxcost}), (\ref{costrm}), and (\ref{probrm}), the average cost of representation matching cannot be lower than $c_{\rm max}$ if $|\set{R}|\ge2$. Nevertheless, there are still good reasons, both practically and conceptually, to consider probabilistic protocols like the representation matching. For example, representation matching serves as an excellent means of \emph{deterrence}. Imagine, in the same setting as plotted in Figure \ref{fig-network}, that Christa is now an authority who needs to ensure that David is being honest and is performing the desired computation. She would do this by sending a state to David and let him run some quantum gates on it. If David is caught cheating, he will have to pay an extremely heavy fine. For such a task, it is more favourable to use representation matching rather than deterministic protocols: Even if the protocol has a chance of failure, in which case Christa may fail to detect David's dishonesty, David would not risk it when the fine is high enough. Notice that the coherent matching test (see Protocol \ref{protocol-rm}) is performed \emph{after} David returns the output state, so he cannot predict if the protocol will succeed (or if Christa will decide to perform the coherent matching test at all). Representation matching could also be used in case the bandwidth of the communication channel is limited and deterministic protocols are prohibited by this limitation. At last, it is always fundamentally meaningful to explore the ultimate limits of quantum information theory, even at the  cost of a small success probability. Instances of such research include quantum cloning  \cite{chiribella2013quantum}, quantum metrology \cite{gendra2013quantum}, quantum programming \cite{sedlak2019optimal} and, here in this work, cost reduction of remote quantum computing.

In this work we have been focusing on zero-error protocols, while our idea of representation matching can be extended to the approximate setting, which could be an interesting direction of future research. In particular, the communication cost of deterministic and approximate remote execution of unitary gate arrays has been shown to be closely related to quantum metrology \cite{yang2020communication}, which intrigues the question whether similar phenomenon exists in the probabilistic setting.

The key of point of our protocol is the zero-error property.
When no error is allowed at all and the communication cost is huge,
our method is useful.
To increase the success probability under the zero-error condition, 
we may need to increase the amount of quantum communication.
In this work, we have not studied how much quantum communication 
is needed to achieve certain  success probability under the zero-error condition.
It seems that this tradeoff requires a new technical tool.
Therefore, it is an interesting future problem to study  
the tradeoff between the success probability and
the amount of quantum communication under the zero-error condition.

\section*{Acknowledgement}
We thank Tomoyuki Morimae for helpful comments. YY was supported by the Swiss National Science Foundation via the National Center for Competence in Research ``QSIT" as well as via project No.\ 200020\_165843, the ETH Pauli Center for Theoretical Studies, and  the AFOSR via grant No.\ FA9550-19-1-0202.  
MH was supported in part by Guangdong Provincial Key Laboratory (Grant No.\ 2019B121203002).

\bibliography{ref}
\appendix
\begin{widetext}

\section{Lower bound on the communication cost of zero-error remote computation}\label{app-lower-bound}

Given a group $\grp{G}$,
the task under consideration is to perform  $U^{\rm target}_g$, which is a projective unitary representation on $\spc{H}^{\rm tot}$
for any $g \in \grp{G}$, on a remote state.
This representation contains the irreducible representations
$\{U_{g}^r\}_{r \in \set{R}}$, i.e.,\ 
\begin{align}
U^{\rm target}_g=\bigoplus_{r\in\set{R}}U_{g}^r
\end{align}
where the irreducible representation space of $U_{g}^r$ is ${\cal H}^r$ with dimension $d_r$.

We prepare two lemmas before proving the main result. 
\begin{lem}\label{L0}
The dimension of the linear subspace spanned by $\{\langle u_{l'}| U^{\rm target}_g|u_l\rangle\}_{l,l'}$ as a function space over $\grp{G}$ is $\sum_{r \in \set{R}}d_r^2$, where $|u_l\rangle$ is a basis of $\spc{H}^{\rm tot}$.
\end{lem}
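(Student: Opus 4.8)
The plan is to identify the span of the matrix coefficients $\{g\mapsto \langle u_{l'}|U^{\rm target}_g|u_l\rangle\}$ with the span of the matrix coefficients of the individual irreducible constituents, and then to invoke the Schur orthogonality relations. First I would fix a basis adapted to the block-diagonal decomposition $U^{\rm target}_g=\bigoplus_{r\in\set{R}}U^r_g$, so that in this basis the only nonzero entries are the $\sum_{r}d_r^2$ functions $(U^r_g)_{ij}$ sitting inside the diagonal blocks, while all off-diagonal blocks vanish identically. Since any other basis $\{|u_l\rangle\}$ of $\spc{H}^{\rm tot}$ is related to the adapted one by a single ($g$-independent) unitary, each function $\langle u_{l'}|U^{\rm target}_g|u_l\rangle$ is a fixed linear combination of the entries $(U^r_g)_{ij}$, and conversely each $(U^r_g)_{ij}$ is recovered by choosing $|u_l\rangle,|u_{l'}\rangle$ among the adapted basis vectors. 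Hence the function space in question coincides with $\Span\{(U^r_g)_{ij}:r\in\set{R},\,1\le i,j\le d_r\}$, which immediately yields the upper bound $\sum_{r}d_r^2$ on its dimension.

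For the matching lower bound, the crucial input is that these $\sum_{r}d_r^2$ functions are linearly independent. This is exactly the content of the Schur orthogonality relations: for a compact group such as $\grp{SU}(d)$ equipped with the normalised Haar measure, the matrix coefficients of inequivalent irreducible representations form an orthogonal family, $\int_{\grp{G}}\overline{(U^r_g)_{i'j'}}\,(U^s_g)_{ij}\,\d g=\delta_{rs}\delta_{ii'}\delta_{jj'}/d_r$. Orthogonality forces linear independence, so the dimension is at least $\sum_{r}d_r^2$, and combining the two bounds establishes the claim. A point worth flagging is that multiplicities and equivalences (such as $(U^\lambda)^\ast\simeq U^{\bar\lambda}$ used later) do not inflate the count: repeated copies of the same irreducible representation contribute identical matrix-coefficient functions, so only the distinct constituents $r\in\set{R}$—each appearing once in the stated direct sum—enter, which is precisely why the answer is $\sum_{r\in\set{R}}d_r^2$.

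The main obstacle will be handling the \emph{projective} nature of $U^{\rm target}_g$: when the representation carries a nontrivial phase cocycle, the naive orthogonality relations must be replaced by their projective counterparts. I would address this by passing to a finite central extension $\widetilde{\grp{G}}$ of $\grp{G}$ on which the projective representation lifts to a genuine linear representation acting by scalars on the central fibre; the constituents remain irreducible and pairwise inequivalent on $\widetilde{\grp{G}}$, so ordinary Schur orthogonality applies there and descends to give the required linear independence of the matrix coefficients as functions on $\grp{G}$. For the concrete applications in this paper the groups involved ($\grp{SU}(d)$ and $\grp{S}(n)$) admit honest linear representations, so this complication is only needed for the fully general statement.
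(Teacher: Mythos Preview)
Your argument is correct and follows the same core idea as the paper: the span of the matrix entries of $U^{\rm target}_g$ equals the span of the irreducible matrix coefficients $(U^r_g)_{ij}$, and these are linearly independent. The paper's proof is literally a one-sentence invocation of that linear-independence fact, whereas you (i) spell out the basis-change reduction giving the upper bound, (ii) justify linear independence via Schur orthogonality, and (iii) explicitly treat the projective case by lifting to a central extension---a point the paper mentions in the setup but glosses over in the proof. So your route is the same, just more complete.
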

\begin{proof}
This is due to the following basic property of irreducible representations: Matrix elements $(U_{g}^r)_{i,j}$ for different irreducible representations $r$ are linearly independent, and therefore the total dimension is $\sum_{r\in\set{R}} d_r^2$.
\end{proof}

\begin{lem}\label{L1}
Assume that
two linear maps $U$ and $V$ from ${\cal H}_1$ to ${\cal H}_2$ satisfy
\begin{align}
U|\psi\rangle= c_{\psi}V|\psi\rangle.
\end{align}
Also Kernel of $U$ is $\{0\}$.
Then, $c_{\psi}$ does not depend on $\psi$.
\end{lem}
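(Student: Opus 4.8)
The plan is to derive the constancy of $c_\psi$ from just three ingredients: the pointwise relation $U|\psi\rangle=c_\psi V|\psi\rangle$, the linearity of $U$ and $V$, and the injectivity of $U$.

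First I would use the injectivity to rule out $c_\psi=0$: for nonzero $|\psi\rangle$ the relation $U|\psi\rangle=c_\psi V|\psi\rangle$ together with $U|\psi\rangle\neq0$ forces $c_\psi\neq0$, so that $V|\psi\rangle=c_\psi^{-1}U|\psi\rangle$ is a nonzero scalar multiple of $U|\psi\rangle$. This simple observation is what lets me transfer linear-independence statements about the $U$ images to the $V$ images.

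The heart of the argument is the additivity step. For two linearly independent vectors $|\psi\rangle,|\phi\rangle$ I would apply the hypothesis to each of $|\psi\rangle$, $|\phi\rangle$, and $|\psi\rangle+|\phi\rangle$, and expand the last one using linearity, obtaining
\[
(c_\psi-c_{\psi+\phi})\,V|\psi\rangle+(c_\phi-c_{\psi+\phi})\,V|\phi\rangle=0.
\]
Since $U$ is injective, $U|\psi\rangle$ and $U|\phi\rangle$ are linearly independent, and by the first step $V|\psi\rangle,V|\phi\rangle$ are nonzero multiples of them and hence also linearly independent. Both coefficients must therefore vanish, giving $c_\psi=c_{\psi+\phi}=c_\phi$.

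It remains to handle linearly dependent pairs (and the one-dimensional case): if $|\phi\rangle=\alpha|\psi\rangle$, linearity gives $U|\phi\rangle=\alpha c_\psi V|\psi\rangle=c_\psi V|\phi\rangle$, which compared with $U|\phi\rangle=c_\phi V|\phi\rangle$ and $V|\phi\rangle\neq0$ yields $c_\phi=c_\psi$. Combining the two cases shows that $c_\psi$ takes the same value for all nonzero $\psi$. I expect the only delicate point to be the linear independence of $V|\psi\rangle$ and $V|\phi\rangle$; this is precisely where the trivial-kernel hypothesis on $U$ is indispensable, since without it (e.g.\ $U=0$) the scalars $c_\psi$ could be chosen arbitrarily.
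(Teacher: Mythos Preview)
Your proof is correct and follows essentially the same route as the paper's: apply the hypothesis to two linearly independent vectors and their sum, then use linear independence of the $V$-images to force the scalars to coincide. Your version is in fact more careful than the paper's, which simply asserts ``by assumption, $V|\psi_1\rangle$ and $V|\psi_2\rangle$ are also independent'' without spelling out (as you do) that this follows from $\ker U=\{0\}$ via $V|\psi\rangle=c_\psi^{-1}U|\psi\rangle$, and which omits the linearly dependent case you handle explicitly.
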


\begin{proof}
Assume that $|\psi_1\>,|\psi_2\>$ are linearly independent.
Considering their superposition, we have
\begin{align}
U (|\psi_1\>+|\psi_2\>)=c_{\psi_1+ \psi_2}V (|\psi_1\>+|\psi_2\>)=
c_{\psi_1}V|\psi_1\>+
c_{\psi_2}V|\psi_2\>.
\end{align}
By assumption, $V|\psi_1\>$ and $V|\psi_2\>$ are also independent. Therefore, the above equalities imply $c_{\psi_1}=c_{\psi_2}=c_{\psi_1+\psi_2}$.
\end{proof}

We consider the less stringent \emph{visible} setting, i.e., when station $B$ knows what $U^{\rm target}_g$ is.  
In the visible setting, the action of station $B$ can be described by a quantum operation $\map{S}_{g}:L(\spc{H}^{{\rm M, in}})\to L(\spc{H}^{{\rm M, out}})$ acting on a memory system, as illustrated in Figure \ref{fig-visible}.
As for station $A$, an encoder is first performed to package (part of) the input state up and a decoder is performed after receiving the state from station $B$.
The encoder is given as an isometric quantum channel $\map{E}:L(\spc{H}^{{\rm M, in}})\to L(\spc{H}^{\rm tot} \otimes\spc{H}^{{\rm M}})$.
The decoder, on the other hand, is given as a quantum operation $\map{D}:L(\spc{H}^{{\rm M, out}}\otimes
\spc{H}_{{\rm R}})\to L(\spc{H}^{\rm tot})$. Notice that $\map{E}$ is assumed to be isometric without loss of generality, because any post-selection or partial-trace can be postponed to $\map{D}$.
The dimension of $\spc{H}^{{\rm M},x}$ ($x={\rm in, out}$) is denoted as $d_{{\rm M},x}$.

Now, we impose the perfect recovery condition, which reads
\begin{align}
U^{\rm target}_g \rho (U^{\rm target}_g)^\dagger=
c_{g,\rho}
\map{D} \circ \map{S}_g \circ \map{E}(\rho) 
\label{H1}
\end{align}
 for any $g \in \grp{G}$ and $\rho \in {\cal S}(\spc{H}^{{\rm M, in}})$. Here $c_{g,\rho}$ is the reciprocal of the success probability of the post-selection.

\begin{theo}\label{theo-general-lowerbound}
Under the condition \eqref{H1}, we have
\begin{align}
d_{{\rm M, in}} d_{{\rm M, out}} \ge \sum_{r \in \set{R}}d_r^2
\label{H10}
\end{align}
\end{theo}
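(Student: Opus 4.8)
\emph{Proof proposal.} The plan is to reduce the operator-valued function $g\mapsto U^{\rm target}_g$ to a fixed linear image of the \emph{amplitude} data of station $B$'s operation, and then to count dimensions using Lemmas~\ref{L0} and~\ref{L1}. First I would fix Kraus representations of the three maps in the protocol: write the encoder isometry as $E:\spc{H}^{\rm tot}\to\spc{H}^{{\rm M, in}}\otimes\spc{H}^{{\rm M}}$ (so $\map{E}(\cdot)=E(\cdot)E^\dag$), station $B$'s operation as $\map{S}_g(\cdot)=\sum_k S_{g,k}(\cdot)S_{g,k}^\dag$ with $S_{g,k}\in L(\spc{H}^{{\rm M, in}},\spc{H}^{{\rm M, out}})$, and the decoder as $\map{D}(\cdot)=\sum_j D_j(\cdot)D_j^\dag$. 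Substituting these into the recovery condition \eqref{H1} evaluated on a pure input $\rho=|\psi\rangle\langle\psi|$, the left-hand side $U^{\rm target}_g|\psi\rangle\langle\psi|(U^{\rm target}_g)^\dag$ is rank one, whereas the right-hand side is the positive sum $c_{g,\psi}^{-1}\sum_{j,k}T_{j,k}|\psi\rangle\langle\psi|T_{j,k}^\dag$ with $T_{j,k}:=D_j(S_{g,k}\otimes I_{\rm M})E$, a family of operators on $\spc{H}^{\rm tot}$.

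Second, I would exploit purity. A sum of positive rank-one terms can equal a rank-one operator proportional to $|U^{\rm target}_g\psi\rangle\langle U^{\rm target}_g\psi|$ only if every vector $T_{j,k}|\psi\rangle$ is itself proportional to $U^{\rm target}_g|\psi\rangle$. Since this holds for every $|\psi\rangle$ and $U^{\rm target}_g$ is a (projective) unitary, hence injective, Lemma~\ref{L1} upgrades the pointwise proportionality to the operator identity $T_{j,k}=\nu_{j,k}(g)\,U^{\rm target}_g$ for scalars $\nu_{j,k}(g)$; and for each $g$ at least one $\nu_{j,k}(g)\neq0$, because the success probability is positive so $\map{D}\circ(\map{S}_g\otimes\map{I}_{\rm M})\circ\map{E}\neq0$.

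Third, the dimension count. Introduce a single \emph{fixed} linear map $\Phi:L(\spc{H}^{{\rm M, in}},\spc{H}^{{\rm M, out}})\to L(\spc{H}^{\rm tot})\otimes\spc{H}_{\rm aux}$, $\Phi(X):=\sum_j D_j(X\otimes I_{\rm M})E\otimes|j\rangle_{\rm aux}$, where $\{|j\rangle\}$ merely indexes the decoder's Kraus operators. Its image obeys $\dim\mathrm{Im}\,\Phi\le\dim L(\spc{H}^{{\rm M, in}},\spc{H}^{{\rm M, out}})=d_{{\rm M, in}}\,d_{{\rm M, out}}$ \emph{regardless of} $\dim\spc{H}_{\rm aux}$. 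By the previous step $\Phi(S_{g,k})=U^{\rm target}_g\otimes|v_{g,k}\rangle$ with $|v_{g,k}\rangle:=\sum_j\nu_{j,k}(g)|j\rangle$. By Lemma~\ref{L0} the operators $\{U^{\rm target}_g\}$ span a space of dimension $N:=\sum_{r\in\set{R}}d_r^2$, so I choose $g_1,\dots,g_N$ with $U^{\rm target}_{g_1},\dots,U^{\rm target}_{g_N}$ linearly independent and, for each $g_i$, a label $k_i$ making $|v_{g_i,k_i}\rangle\neq0$. The vectors $\{U^{\rm target}_{g_i}\otimes|v_{g_i,k_i}\rangle\}_{i=1}^N$ are then linearly independent: contracting any vanishing combination against each fixed basis vector of $\spc{H}_{\rm aux}$ and using independence of the $U^{\rm target}_{g_i}$ forces all coefficients to vanish. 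These are $N$ independent elements of $\mathrm{Im}\,\Phi$, whence $d_{{\rm M, in}}\,d_{{\rm M, out}}\ge\dim\mathrm{Im}\,\Phi\ge N=\sum_{r\in\set{R}}d_r^2$, which is \eqref{H10}.

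I expect the main obstacle to be obtaining the \emph{linear} bound $d_{{\rm M, in}}\,d_{{\rm M, out}}$ rather than its square. A naive route through the Choi matrix or Stinespring dilation of $\map{S}_g$ encodes the $g$-dependence in data of size $(d_{{\rm M, in}}d_{{\rm M, out}})^2$ and yields only $d_{{\rm M, in}}d_{{\rm M, out}}\ge\sqrt{\sum_r d_r^2}$, because the environment of $\map{S}_g$ reintroduces quadratically many degrees of freedom. The whole point is to prevent this environment from contributing, which is what purity licenses: the coherent reconstruction of $U^{\rm target}_g$ proceeds through the individual Kraus operators $S_{g,k}$, each living in the small space $L(\spc{H}^{{\rm M, in}},\spc{H}^{{\rm M, out}})$, and for each $g$ only one of them is needed. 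The remaining delicacy is that the decoder may have many Kraus operators with a $g$-dependent ``active'' branch; this is absorbed by the tensoring-with-$|j\rangle$ device, so that decoder multiplicity enters only through the harmless ancilla $\spc{H}_{\rm aux}$ and never inflates $\dim\mathrm{Im}\,\Phi$.
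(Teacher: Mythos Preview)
Your argument is correct and parallels the paper's proof closely: both pass to Kraus (equivalently Stinespring) representatives, use the rank-one structure of the output to force each branch $T_{j,k}=D_j(S_{g,k}\otimes I_{\rm M})E$ to be a scalar multiple of $U^{\rm target}_g$, invoke Lemma~\ref{L1} to make that scalar state-independent, and finish with the dimension count of Lemma~\ref{L0}. The only substantive difference is in how the final count is packaged. The paper fixes one environment pair $(i,j)$, absorbs the scalar into $V_{g,i,j}:=\alpha_{g,i,j}V_{g,i}\in L(\spc{H}^{{\rm M,in}},\spc{H}^{{\rm M,out}})$, writes $U^{\rm target}_g=V_{\map{D},j}V_{g,i,j}V_{\map{E}}$, and observes that the matrix elements of $U^{\rm target}_g$ (as functions on $\grp{G}$) therefore lie in the at-most-$d_{{\rm M,in}}d_{{\rm M,out}}$-dimensional span of the matrix elements of $V_{g,i,j}$. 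You instead build the fixed linear map $\Phi(X)=\sum_j D_j(X\otimes I_{\rm M})E\otimes|j\rangle$ and bound $\dim\mathrm{Im}\,\Phi\le d_{{\rm M,in}}d_{{\rm M,out}}$ directly in operator space. Your device is slightly more robust: the paper's version tacitly needs a single pair $(i,j)$ for which the branch $V_{\map{D},j}V_{g,i}V_{\map{E}}$ is nonzero for \emph{every} $g$ (otherwise $\alpha_{g,i,j}$ is undefined and the identity $U^{\rm target}_g=V_{\map{D},j}V_{g,i,j}V_{\map{E}}$ fails at those $g$), whereas your auxiliary-space construction lets the active decoder branch vary with $g$ at no cost, since that multiplicity is absorbed into the tensor factor $|v_{g,k}\rangle$ rather than into the domain of $\Phi$.
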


\begin{proof}
First, we invoke Stinespring dilations for all the involved channels (operations). 
We choose the environment systems 
$ \spc{H}_{{\rm E}_1}$
and $ \spc{H}_{{\rm E}_2}$ as well as the post-selection systems $\spc{H}_{{\rm S}_1}$ and $\spc{H}_{{\rm S}_2}$.
Station $B$'s operation can be purified as
\begin{align}
\map{S}_g(\cdot)=\Tr_{{\rm E}_1}\left(\<\psi_{{\rm S}_1}|V_g(\cdot)V_g^\dag|\psi_{{\rm S}_1}\>\right)
\end{align}
for an isometry $V_g:\spc{H}^{{\rm M, in}}\to\spc{H}^{{\rm M, out}} \otimes \spc{H}^{{\rm E}_1}\otimes\spc{H}^{{\rm S}_1}$ and a pure state $|\psi_{{\rm S}_1}\>$ on $\spc{H}^{{\rm S}_1}$. Similarly, for the decoder we have
\begin{align}
\map{D}(\cdot)=\Tr_{{\rm E}_2}\left(\<\psi_{{\rm S}_2}|V_{\map{D}}(\cdot)V_{\map{D}}^\dag|\psi_{{\rm S}_2}\>\right)
\end{align}
with $V_{\map{D}}:\spc{H}^{{\rm M, out}}\otimes\spc{H}^{{\rm M}}\to\spc{H}^{\rm tot} \otimes \spc{H}^{{\rm E}_2}\otimes\spc{H}^{{\rm S}_2}$ being an isometry and $|\psi_{{\rm S}_2}\>$ being a pure state on $\spc{H}^{{\rm S}_2}$.
Moreover, by definition, the encoder is of the form
\begin{align}
\map{E}(\cdot)=V_{\map{E}}(\cdot)V_{\map{E}}^\dag
\end{align}
with $V_{\map{E}}$ being an isometry.

With the above dilations, the condition \eqref{H1} can be rewritten as
\begin{align}
U^{\rm target}_g \rho (U^{\rm target}_g)^\dagger=
c_{g,\rho}
\Tr_{{\rm E}_1,{\rm E}_2} \left(\<\psi_{{\rm S}_1},\psi_{{\rm S}_2}|V_{\map{D}}  V_g  V_{\map{E}}\rho \left(V_{\map{E}}  V_g  V_{\map{D}}\right)^\dagger|\psi_{{\rm S}_1},\psi_{{\rm S}_2}\>\right)
\label{H2}
\end{align}
with a coefficient $c_{g,\rho}$ for $g \in \grp{G}$ and $\rho \in {\cal S}(\spc{H}^{{\rm M, in}})$.

Now, we choose 
a basis on 
$\{|e_{k}\rangle\}$ on 
$ \spc{H}^{{\rm E}_1}$ 
and a basis on 
$\{|f_j\rangle\}$ on 
$\spc{H}^{{\rm E}_2}$.
Then, we have
\begin{align}
U^{\rm target}_g \rho (U^{\rm target}_g)^\dagger=
c_{g,\rho}
\sum_{i,j}   \<e_i,f_j,\psi_{{\rm S}_1},\psi_{{\rm S}_2}|V_{\map{D}}  V_g  V_{\map{E}}\rho \left(V_{\map{E}}  V_g  V_{\map{D}}\right)^\dagger|e_i,f_j,\psi_{{\rm S}_1},\psi_{{\rm S}_2}\>.
\label{H3}
\end{align}
Since $\rho=|\psi\>\<\psi|$ is a pure state,
for $i,j$, we have
\begin{align}
U^{\rm target}_g |\psi\>\<\psi| (U^{\rm target}_g)^\dagger=
c_{g,\psi,i,j}\<e_i,f_j,\psi_{{\rm S}_1},\psi_{{\rm S}_2}|V_{\map{D}}  V_g  V_{\map{E}}|\psi\>\<\psi|\left(V_{\map{E}}  V_g  V_{\map{D}}\right)^\dagger|e_i,f_j,\psi_{{\rm S}_1},\psi_{{\rm S}_2}\>.
\label{H4}
\end{align}
with coefficients $\{c_{g,\psi,i,j}\}$.
Define the map $V_{g,i}:=\<e_i,\psi_{{\rm S}_1}|V_{g}$
from $\spc{H}^{{\rm M, in}}$
to  $\spc{H}^{{\rm M, out}}$ and the map $V_{\map{D},j}:=\<f_j,\psi_{{\rm S}_2}|V_{\map{D}}$ from 
 $\spc{H}^{{\rm M, out}}\otimes \spc{H}_{{\rm M}}$
to  $\spc{H}^{\rm tot}$, which are both linear. Substituting into Eq.\ (\ref{H4}), we have
\begin{align}
U^{\rm target}_g |\psi\>\<\psi| (U^{\rm target}_g)^\dagger=
c_{g,\psi,i,j}
V_{\map{D},j}  V_{g,i}  V_{\map{E}}|\psi\>\<\psi| (V_{\map{D},j}  V_{g,i}  V_{\map{E}})^\dagger 
\label{H5}
\end{align}
Thus, there exists $\theta_{g,\psi,i,j}$ such that
\begin{align}
U^{\rm target}_g |\psi\rangle
=
\sqrt{c_{g,\psi,i,j}}e^{i \theta_{g,\psi,i,j}}
V_{\map{D},j}  V_{g,i}  V_{\map{E}}|\psi\rangle.
\label{H6}
\end{align} 
Due to Lemma \ref{L1}, 
$\sqrt{c_{g,\psi,i,j}}e^{i \theta_{g,\psi,i,j}}$ is independent of $|\psi\>$, and we rename it as
$\alpha_{g,i,j}$.
Thus, we have
\begin{align}
U^{\rm target}_g 
=
\alpha_{g,i,j} V_{\map{D},j}  V_{g,i}  V_{\map{E}}
\label{H7}
\end{align}
Defining the matrix ${V}_{g,i,j}:=\alpha_{g,i,j} V_{g,i} $
from the space $\spc{H}_{{\rm M, in}}$
to the space $\spc{H}_{{\rm M, out}}$, we have
\begin{align}
U^{\rm target}_g 
=
V_{\map{D},j}  V_{g,i,j}  V_{\map{E}}.
\label{H8}
\end{align}

On one hand, the dimension of the linear space spanned by $\{\langle u_{l'}| V_{g,i,j}|u_l\rangle\}_{l,l'}$ 
is at most $d_{{\rm M, in}} d_{{\rm M, out}}$. On the other hand, since $U^{\rm target}_g$ can be obtained by a linear transform on $V_{g,i,j}$, we know that the dimension of $\Span\{\langle u_{l'}| V_{g,i,j}|u_l\rangle\}_{l,l'}$ is lower bounded by the dimension of $\Span\{\langle u_{l'}| U^{\rm target}_g |u_l\rangle\}_{l,l'}$. Applying Lemma \ref{L0} we get Eq.\ 
\eqref{H10}.
\end{proof}
\end{widetext} 
\end{document}